\documentclass[11pt]{article}
\usepackage{algorithm}
\usepackage[indLines = true, noEnd = true, commentColor = blue]{algpseudocodex}
\algrenewcommand\alglinenumber[1]{\tiny #1:}
\usepackage{amsmath,amsthm,amssymb}
\usepackage[a4paper, margin=1in]{geometry}
\usepackage{cite}
\usepackage{amsfonts}
\usepackage[mathscr,mathcal]{euscript}
\usepackage{times}
\usepackage{tikz}
\usetikzlibrary{positioning,calc,arrows.meta}
\usepackage{xurl}
\usepackage{csquotes}
\usetikzlibrary{patterns}
\usepackage{graphicx}
\usepackage{booktabs}
\usepackage{beramono}
\usepackage{inconsolata}    
\theoremstyle{plain}
\newtheorem{theorem}{Theorem}
\newtheorem{corollary}{Corollary}[theorem]
\theoremstyle{definition}
\newtheorem{definition}{Definition}
\theoremstyle{remark}
\newtheorem{remark}{Remark}
\usepackage{caption}
\makeatletter
\renewcommand{\paragraph}{%
  \@startsection{paragraph}{4}%
  {\z@}{2.00ex \@plus 1ex \@minus .2ex}{-1em}%
  {\normalfont\normalsize\bfseries}%
}
\let\oldbibliography\thebibliography
\renewcommand{\thebibliography}[1]{%
  \oldbibliography{#1}%
  \setlength{\itemsep}{-1pt}
}
\providecommand{\keywords}[1]
{
  \small	
  \textbf{\textit{Keywords:--}} #1
}
\makeatother
\title{A Guess and Determine Attack on the Elliptic Curve Discrete Logarithm Problem}
\author{Ayan Mahalanobis\thanks{IISER Pune, Pune, INDIA; Email: ayan.mahalanobis@gmail.com}}
\date{}
\AddToHook{env/abstract/before}{\vspace{-1.5em}}
\begin{document}
\maketitle
\begin{abstract}
This paper is a continuation of our earlier work, in which we described a Las Vegas algorithm to solve the elliptic curve discrete logarithm problem. The Las Vegas algorithm reduces the elliptic curve discrete logarithm problem to finding a zero minor in a matrix. Using the intersection poset of a hyperplane arrangement, we develop an algorithm to find a zero minor in a rectangular matrix. Our methods are elementary. We discuss the complexity and success probability of our algorithm and provide implementation details. Finding a zero minor in a matrix is also of independent interest.
\end{abstract}
\keywords{Elliptic Curve Discrete Logarithm Problem, Hyperplane Arrangements, Matrix Completion}
\section{Introduction} This builds on our earlier work and develops a new attack on the \textbf{elliptic curve discrete logarithm problem}. We will refer to this problem as the \emph{discrete logarithm problem} in this paper. We expect our reader to be familiar with our earlier work~\cite{first,second,em}, which will be referred to often, and to be comfortable with the fact that a matrix $\mathcal{M}$, whose left-kernel is $\mathcal{K}$, can be computed from points $\mathrm{P}$ and $\mathrm{Q}$ of a non-singular elliptic curve $\mathcal{E}$ over a field $\mathbf{F}$. Furthermore, the dimension of $\mathcal{K}$ is $\ell$ where $\ell$ is an even integer. The attack we develop does not depend on the field $\mathbf{F}$ or the non-singular elliptic curve $\mathcal{E}$ chosen. 

The discrete logarithm problem is important in public key cryptography. Though it is not quantum secure, it is still used in many cryptographic protocols and signature schemes. One prominent example is the \emph{Transport Layer Security}~\cite{rfc}, in particular \texttt{secp256r1}. The signature scheme based on the discrete logarithm problem is also used in cryptocurrencies such as \emph{Bitcoin}. The discrete logarithm problem has been under constant attack since its introduction~\cite{koblitz1987, miller1985}. We will not review any of these attacks, but will focus narrowly on the attack that we developed. For a general introduction to elliptic curve cryptography, we recommend Hoffstein et al.~\cite{hoffstein2014introduction}, and for attacks Galbraith and Gaudry~\cite{gailbraith}. 

This new attack falls under the paradigm of a \emph{guess and determine} algorithm. The job at hand is to find a zero minor in a rectangular matrix. For this, we will guess a part of a zero minor and then determine the rest of it. To develop this attack, we used ideas from \emph{hyperplane arrangements}, especially \emph{central} hyperplane arrangements and also from Grassmann coordinates. Hyperplane arrangements are interesting in their own right and is an important class of geometric lattices. Their use in this algorithm is interesting, giving rise to interesting theorems and questions. Simply put, the algorithm we present is a marriage of Grassmann coordinates and hyperplane arrangements. There is an eerie similarity between our work and matrix completion which is so popular these days because of its application in \emph{machine learning}. 

As Neal Koblitz quotes~\cite{koblitz} \enquote{...in the real world if your cryptography fails, you lose
a million dollars or your secret agent gets killed}. Thus, we should pay attention to new attacks. However, we mention upfront, our attack poses \emph{no security risk} to elliptic curve cryptography at this time, see Tables~\ref{tab:pollard-probability}\, \&~\ref{prob-prod}.

\subsection{Contributions in this paper} The main contribution of this paper is a guess and determine attack on the discrete logarithm problem. In this attack, we are looking for a maximal zero-minor in a rectangular matrix $\mathcal{K}$. We showed earlier that a zero-minor solves the discrete logarithm problem (see~\cite[Section 3]{em} or ~\cite[Section 3]{second}). 

In this paper, we guess a part of the minor by choosing a few of its columns arbitrarily and then determine the other columns that make it a zero minor. There are two steps in this attack. The first one uses the guess in a recursive step and reduces the size of the matrix by half. The other step identifies a central hyperplane arrangement(see Section~\ref{hyperplane}). These hyperplanes are the left-kernels of the columns of the reduced matrix.
For this determine step, we use a positive integer $\mathtt{d}$, the defect. For a chosen $\mathcal{E}$, $\mathtt{d}$ is fixed, and the worst case complexity of the determine step of the attack is in polynomial time. 

It is easy to show that our earlier algorithm using almost principal minors~\cite[Section 5]{em} can be described in this current format, where the guesses are almost principal minors. Moreover, with this new determine step, it would be interesting to see how our earlier work fares. However, we are not repeating our earlier experiment. Instead, we focus on random guesses in the recursive step. One fundamental difference between this work and our earlier work is that we were able to exploit the anti-diagonal format of the kernel matrix $\mathcal{K}$. This gave rise to an efficient algorithm. The idea behind the defect is similar to the deviations from our earlier work.

This project started with an attempt to develop a polynomial time attack algorithm. There are two parameters in the algorithm, one is the prime $\mathtt{p}$, which is the order of the group of the discrete logarithm problem, and the other is the positive integer $\mathtt{d}$ called the defect. Here $\mathtt{d}$ depends on $\mathtt{p}$ through the probability, see Table~\ref{probtable}. For a fixed $\mathtt{d}$, our algorithm is polynomial in $\log_2{\mathtt{p}}$. However, when we take the probability into account and select the $\mathtt{d}$ appropriately, the algorithm is no better than the exhaustive search, see Table~\ref{prob-prod}. 
\paragraph{Notation} We use the following notations: $\mathrm{H}$ for hyperplanes or their basis matrix, which is the left-kernel of some column in a matrix. The matrix $\mathcal{K}$ will always be the left-kernel of $\mathcal{M}$ or its basis-matrix and is of size $\ell\times 2\ell$. We also assume that $\ell$ is an even integer throughout this paper and that $2\ell^\prime = \ell$. Furthermore, $\mathtt{d}$ is a positive integer and stands for defect. For a matrix $\mathcal{A}$, $\mathcal{A}[\mathtt{a}]$ for an array $\mathtt{a}$, whose length is less than the number of columns of $\mathcal{A}$, is the submatrix of $\mathcal{A}$, formed by taking those columns of $\mathcal{A}$ whose indices belong to $\mathtt{a}$. The index-origin for matrices, vectors, and arrays is 1, i.e., the first element is indexed by $1$, and vectors are written as rows in a matrix. We use RREF for row-reduced echelon form of a matrix. We use positive-integer interval, in which, $[n]$ stands for the set $\{1,2,\ldots,n\}$.
\section{An overview of our earlier work}
We start by defining the discrete logarithm problem, solving which is the purpose of this paper.
\begin{definition}[The discrete logarithm problem]
Let $\mathcal{E}$ be the group of rational points of an elliptic curve $\mathcal{E}$ over the field $\mathbf{F}$. Without loss of generality, we assume that the group $\mathcal{E}$ is of prime order $\mathtt{p}$ and is generated by $\mathrm{P}$. The discrete logarithm problem is: given $\mathrm{P}$ and $\mathrm{Q}$ in $\mathcal{E}$, find the integer $\mathrm{m}$, where $0<\mathrm{m}<\mathtt{p}$ and
$\mathrm{m}\mathrm{P} = \mathrm{Q}$.  
\end{definition}
Our attack can be summarized in one sentence: \emph{find a zero minor and solve the discrete logarithm problem}. And, this paper is in search for that zero minor. To understand our matrices, we first construct a matrix $\mathcal{M}$ from $2\ell$ distinct points on $\mathcal{E}$ over a field $\mathbf{F}$~\cite[Algorithm 1]{em}. Here $\ell = 3n^\prime$ is the number of points a homogeneous plane projective curve of degree $n^\prime$ is expected to intersect $\mathcal{E}$. The integer $n^\prime$ is chosen to be approximately equal to $\log_2(\mathtt{p})$, where $\mathtt{p}$ is a prime. We assume that the size of the group of the discrete logarithm problem is also $\mathtt{p}$. Henceforth, $\mathtt{p}$ will denote a prime and also the size of the group for the discrete logarithm problem.
Then we compute the left-kernel $\mathcal{K}$ of $\mathcal{M}$, which is of dimension $\ell$. We will use $\mathcal{K}$ for both the subspace and its basis matrix, this will be true for all other subspaces and their basis matrices as well. Both matrices $\mathcal{M}$ and $\mathcal{K}$ are defined over the field $\mathbf{F}$.

When we think of $\mathcal{K}$ as a basis matrix, it has size $\ell\times 2\ell$, where $\ell=\textsc{O}\left(\log_2{\mathtt{p}}\right)$. Since it is a basis-matrix, we can row-reduce it to a simpler form while keeping the subspace it spans fixed. The form that we adopted earlier is the anti-diagonal format, which means, by row reduction we reduced the matrix $\mathcal{K}$, such that the last $\ell$ columns represent a $\ell\times\ell$ matrix with one on the anti-diagonal~\cite[Equation (3)]{em} and zero everywhere else. We will call this the \textbf{anti-diagonal format}.
After reduction, we call the first $\ell$ columns the dense part. If there is a zero entry in the dense part, we have solved the discrete logarithm problem (see Problem L,~\cite{first}). So, we will safely assume in this paper that the dense part has no zero entries.

A zero minor in the dense part corresponds to a zero \emph{maximal-minor} in $\mathcal{K}$~\cite[Theorem 3.1]{em}. A maximal-minor is a minor of size $\ell\times\ell$ in $\mathcal{K}$. In this paper, we will mostly work with maximal-minors; thus, when we use a minor, \textbf{we mean a maximal-minor} unless explicitly stated otherwise.

Our Las Vegas attack proceeds as follows: we choose distinct random integers $n_i$ and $-n_j$ and create distinct points $P_i = n_i\mathrm{P}$ and $Q_j = -n_j\mathrm{Q}$ on $\mathcal{E}$. The complement of the indices of a zero minor in $\mathcal{K}$ gives rise to a linear relation of the form: 
\begin{equation}\label{eqn1}
\sum_{i}n_i-\mathrm{m}\sum_{j}n_j = 0 \mod p.
\end{equation} 
Since we know $n_i$ and $n_j$, we can solve for $\mathrm{m}$.

In this paper, we take $\ell$ many $n_i$ and $-n_j$ points, totalling to $2\ell$ points. Then we order the rows in $\mathcal{M}$, so that $P_i$ are on the top of the matrix, where the rows are indexed by $n_i$ and follow the same ordering in the set of $n_i$. In other words, $n_1$ creates $P_1$ which is the first row of $\mathcal{M}$. This is followed by $n_2$ and so forth. Once $n_i$ are exhausted,
it is followed by $-n_j$ the same way as before, and which acts on $\mathrm{Q}$, with the same ordering. Thus, the first $\ell$ rows of $\mathcal{M}$ are a scalar multiples of 
$\mathrm{P}$ followed by $\ell$ rows that are scalar multiples of $\mathrm{Q}$. Then the first $\ell$ columns of 
$\mathcal{K}$ are indexed by scalar multiples of $\mathrm{P}$ and the next $\ell$ columns are indexed by scalar multiples of $\mathrm{Q}$. In the guess and determine attack developed here, we guess half the number of columns from the dense part of $\mathcal{K}$, this is the same as choosing the $\sum_{i}n_i$ in the above equation and then we determine the $\sum_{j}n_j$, such that, the above equation is satisfied and that solves for $\mathrm{m}$, the discrete logarithm. 

Our current experiments show that this way of dividing the problem is successful. To determine columns from the sparse part, we use ideas from central hyperplane arrangements. This has two advantages. First, it reduces the complexity to polynomial time by introducing \emph{defect}. The second, and an important advantage is the introduction of signatures. The signatures reduce the problem of finding a zero minor to a much smaller matrix. However, in it lies a bottleneck of our computation. We need to do many kernel computations, though the matrices are all of the size $(\mathtt{d}-1)\times\mathtt{d}$, which is small, but still it takes up a lot of time. However, there are algorithmic optimizations, \emph{viz}., Gray codes~\cite{savage} and rank-one row replacement to create kernels of matrices which differ only by one row and then parallelization. All these techniques will alleviate the bottleneck caused by a large number of choices.

\section{Finding a mate is finding a zero minor}\label{mate}
The next idea defines the recursion in our algorithm. The idea originates in Hodge and Pedoe~\cite[Chapter VII]{HP}. In this, they describe the geometry for a Grassmann coordinate to be zero. Recall that a Grassmann coordinate is zero if and only if a maximal-minor is zero. The condition they describe is: suppose $S_k$ is a $k$-dimensional subspace of a $n$-dimensional vector space $\mathcal{V}$. Then the basis matrix of $S_k$ is a $k\times n$ matrix. 
The ordered vector of all minors of this matrix is the Grassmann coordinate of the subspace. Clearly, a Grassmann coordinate is not unique as many different bases may generate the same subspace, however they are unique up to a scalar multiple. Thus, one can ask: when is a Grassmann coordinate zero? Let $e_i=(0,0,\ldots,1,\ldots 0)$ be the standard basis of $\mathcal{V}$ with $1$ in the $i$th place and $0$ everywhere else. Let $\textsc{E}$ be the subspace generated by $k$ standard basis elements. Then $S_k$ has a zero Grassmann coordinate if and only if $S_k\cap\text{E}\neq 0$ for some $\textsc{E}$ and the zero Grassmann coordinates are given by the basis vectors that generate $\textsc{E}$. Grassmann coordinates are also known as Pl\"{u}cker coordinates, see~\cite[Chapter 10]{shafarevich}. Recall, the left-kernel $\mathcal{K}$ is in the anti-diagonal format and is of size $\ell\times 2\ell$. The first index of rows and columns is $1$.
The following theorem is a formalization of the above argument. 

\begin{theorem}
Let $S$ be the set $\{nr+1, nr+2, \ldots, nr+\ell\}$ which is the set of all column indices in the sparse part. For $0<k<\ell$, let $c_1,c_2,\ldots,c_k$ be a set of columns from the dense part of $\mathcal{K}$. 
Let $\mathcal{K}^\prime$ be the submatrix of $\mathcal{K}$ formed by these columns. Then $\mathcal{K}^\prime$ is of size $\ell\times k$. There is a  zero minor in $\mathcal{K}^\prime$, if and only if, there is a zero minor in $\mathcal{K}$. Note, $\mathcal{K}^\prime$ is a tall matrix. If the zero minor of $\mathcal{K}^\prime$ is from rows $r_1,r_2,\ldots,r_k$, and $S^\prime=S\smallsetminus\{2\ell-r_1+1,2\ell-r_2+1,\ldots,2\ell-r_k+1\}$ then the zero minor of $\mathcal{K}$ consists of columns $\{c_1,c_2,\ldots,c_k\}\cup S^\prime$. Clearly, $S^\prime$ is of size $\ell-k$ and thus the total number of columns is $\ell$.
\end{theorem}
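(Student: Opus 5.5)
The plan is a Laplace (cofactor) expansion along the sparse columns, using the anti-diagonal format of $\mathcal{K}$. Write $\mathcal{K}=[\,\mathcal{D}\mid \mathcal{J}\,]$, where $\mathcal{D}$ is the $\ell\times\ell$ dense part and $\mathcal{J}$ is the $\ell\times\ell$ anti-diagonal identity. I read the index set $S$ in the statement as $\{\ell+1,\ldots,2\ell\}$. Column $\ell+j$ of $\mathcal{K}$ is the standard basis column with its single $1$ in row $\ell-j+1$. Equivalently, the sparse column whose unique nonzero entry sits in row $r$ has index $2\ell-r+1$. This is exactly the correspondence $r_i\mapsto 2\ell-r_i+1$ used to define $S^\prime$. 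So the first step is to record this bijection between sparse columns and rows.

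Next, fix the dense columns $C=\{c_1,\ldots,c_k\}$ and any set $T\subseteq S$ of $\ell-k$ sparse columns. Let $R_T$ be the set of rows in which the columns of $T$ have their $1$'s, and let $R=[\ell]\smallsetminus R_T$, a set of $k$ rows. Expand the $\ell\times\ell$ determinant of $\mathcal{K}[C\cup T]$ successively along each column of $T$. Each such column has a single nonzero entry, equal to $1$, so each expansion step deletes that column and its row and contributes only a sign. After all $\ell-k$ steps we obtain
\[
\det \mathcal{K}[C\cup T] \;=\; \pm \det \mathcal{K}^\prime_{R},
\]
where $\mathcal{K}^\prime_{R}$ is the $k\times k$ submatrix of $\mathcal{K}^\prime=\mathcal{K}[C]$ on the rows $R$. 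Taking $R=\{r_1,\ldots,r_k\}$ forces $T=S^\prime$. This gives the explicit description of the zero minor in the statement, and also the count $|C\cup S^\prime|=\ell$.

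The equivalence follows from the observation that $T\mapsto R$ is a bijection between $(\ell-k)$-subsets of $S$ and $k$-subsets of rows. Hence the maximal minors of $\mathcal{K}$ whose dense columns are exactly $C$ coincide, up to sign, with the $k\times k$ minors of $\mathcal{K}^\prime$. In particular, one is zero if and only if the corresponding other one is zero.

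The main obstacle is interpretive rather than computational: the phrase ``zero minor in $\mathcal{K}$'' must be read as a zero maximal minor whose dense columns are the chosen $c_1,\ldots,c_k$. Equivalently, it is existential over the choice of $C$: $\mathcal{K}$ has a zero maximal minor if and only if some such $\mathcal{K}^\prime$ has a zero $k\times k$ minor. I would make this precise with a short remark excluding the degenerate cases. A minor using only sparse columns ($k=0$) is $\pm1\neq 0$. The case $k=\ell$, a minor using only dense columns, is excluded by the hypothesis $0<k<\ell$. Therefore every zero maximal minor of $\mathcal{K}$ has some $k$ with $0<k<\ell$ dense columns, and falls under the bijection above.
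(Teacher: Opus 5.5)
Your proof is correct, but it uses a different mechanism from the paper's. The paper argues through linear dependences. A zero maximal minor of $\mathcal{K}$ with both dense and sparse columns gives a vanishing combination of its columns. Because each sparse column is a standard basis vector, the dense part of that combination is a combination of $c_1,\ldots,c_k$ vanishing on the rows not covered by the chosen sparse columns. That gives a singular $k\times k$ submatrix of $\mathcal{K}^\prime$. Conversely, a combination of the $c_i$ vanishing on rows $r_1,\ldots,r_k$ can be cancelled by the sparse columns in $S^\prime$, producing a zero column.

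Your cofactor expansion replaces both directions with the single identity $\det\mathcal{K}[C\cup T]=\pm\det\mathcal{K}^\prime_R$ and the explicit bijection $T\leftrightarrow R$. This gives more than the paper needs: the minors agree up to sign, not merely in vanishing. It also avoids a point the paper leaves implicit, namely that the dense coefficients in the dependence are not all zero (true because the sparse columns are independent). The paper's route instead produces an explicit dependency vector, which fits the kernel/central-element viewpoint used later. Your reading of the ``if and only if'' as concerning zero maximal minors whose dense columns are exactly $c_1,\ldots,c_k$ is the one the paper's proof implicitly uses.

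There is one slip in your closing remark. The hypothesis $0<k<\ell$ restricts your choice of $C$; it does not exclude a zero maximal minor of $\mathcal{K}$ made of all $\ell$ dense columns. If the dense part $\mathcal{D}$ is singular, such a minor exists and need not come from any $k<\ell$. For example, take $\ell=2$ and $\mathcal{D}$ the all-ones matrix. The only zero maximal minor of $\mathcal{K}$ is on columns $\{1,2\}$, while every $1\times 1$ minor of every $\mathcal{K}[C]$ equals $1$. So your existential version should either allow $k=\ell$, where $\mathcal{K}^\prime=\mathcal{D}$, $S^\prime=\emptyset$ and the claim is trivial, or be restricted to zero minors containing at least one sparse column, as the paper does.
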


\begin{proof}
The proof of this theorem is fairly straightforward and involves no Grassmann coordinates. First, assume that $\mathcal{K}$ has a zero minor and that it consists of both the dense and sparse parts. Then there are column operations in that minor that make a zero column in that minor. This means that, by column operations on the dense part, we will produce a vector whose zero positions will match those arising from the sparse part. This is due to the special structure of the columns in the sparse part. Thus, column operations of columns from the dense part will have a zero in a particular position, if and only if, the column with $1$ in that position from the sparse part was not included to form the minor.

Conversely, if there is a zero minor in $\mathcal{K}^\prime$ with rows $r_1,r_2,\ldots,r_k$ then there are row operations on $\mathcal{K}^\prime$ that give rise to a vector with zeros in the $r_1,r_2,\ldots,r_k$ rows. Then, if we add columns to $\mathcal{K}^\prime$ from the sparse part that have zeros in those positions, we will get a zero column in the extended minor. It is a simple exercise to see that the number of columns from the dense part and the number of columns from the sparse part add to $\ell$. This completes the proof.    
\end{proof}
\begin{remark}
In our earlier work~\cite[Theorem 3.1]{em}, we proved that there is a one-one correspondence between (non-maximal) zero minors in the dense part and maximal minors of the whole matrix. We now examine this situation using the above theorem. If there is a (non-maximal) zero minor in the dense part, create a submatrix of columns from the zero minor. The zero minor is then a maximal minor of the submatrix. The above theorem gives us the columns from the sparse part that give rise to a maximal minor in the original matrix. Thus, (non-maximal) zero minors give rise to maximal zero-minors. 

Conversely, for a maximal zero minor in the original matrix, the columns from the sparse part give rise to rows of a maximal zero minor in the submatrix of the dense part obtained from the columns in the dense part. Thus, we have rows and columns for a (non-maximal) zero-minor in the dense part. 
\end{remark}

This theorem leads us to a \enquote{finding a mate} algorithm, which is fairly straightforward to describe and defines our recursion! First select an index set $\mathtt{a}$ with index of $k (<\ell)$ columns from the dense part of $\mathcal{K}$ and form the submatrix $\mathcal{K}^\prime=\mathcal{K}[\mathtt{a}]$. Then take the transpose of $\mathcal{K}^\prime$, which is now a $k\times\ell$ matrix. Then do row operations on this transpose to convert it to the anti-diagonal format. Then take another set of columns from the dense part of this converted matrix and keep doing that, until the resultant matrix is small enough to do an exhaustive search for a zero minor. If there is a zero minor in the resultant matrix there is a zero minor in the original matrix $\mathcal{K}$. However, from our computational experience the above method does not work. In particular, if the number of steps in the recursion is too many, we will end up with a matrix with no zero minors. In the first case, $k$ choices were made, and the set of columns from the sparse part completes these choices by producing a zero minor. When we do the next step in the recursion, we are making more choices from the columns of $\mathcal{K}$. These choices that we make consecutively, reduce the possibility of finding a mate, and we end up being single. Thus, in describing the attack algorithm, we use recursion only once, reducing the size of the matrix by half and making the algorithm computationally efficient. We are not saying that higher-order recursions will never help. As the size of the matrix $\mathcal{K}$ grows with the size of the field, so will the number of zero minors in it. Then we can use two-step recursion with the same effect as we have now.
\section{Hyperplane arrangements and singular matrices}\label{hyperplane}
A square matrix $\mathcal{A}$ is singular if its determinant is zero. This paper is about a search for a square matrix $\mathcal{A}$ (maximal-minor) in the rectangular matrix $\mathcal{K}$ (left-kernel). 
In a vector space $\mathcal{V}$ of dimension $n$, any subspace of dimension $n-1$ is called a \emph{hyperplane}. In the literature, there is a distinction between linear hyperplanes and affine hyperplanes. Since all our hyperplanes are linear hyperplanes, we will use \emph{hyperplanes} for \emph{linear hyperplanes}. A finite collection of hyperplanes is called a \emph{hyperplane arrangement}; we will use the term arrangement from now on. The set of all intersections of these hyperplanes forms a poset and is called the \emph{intersection poset} of the arrangement, which is ordered by reverse inclusion and is known to form a geometric lattice. Our standard reference for arrangement is Stanley~\cite[Chapter 3]{stanley} or~\cite{stanley1}. Corresponding to a $n\times n$ matrix $\mathcal{A}$ with columns $c_1,c_2,\ldots,c_n$ we can define $n$ hyperplanes $\mathcal{K}_1,\mathcal{K}_2,\ldots,\mathcal{K}_n$ as the left-kernels of these columns, respectively. We will refer to this particular arrangement as the arrangement from the matrix $\mathcal{A}$. Implicit is the assumption that the columns are of rank one. We will assume this throughout this paper. In a straightforward way, one can extend this arrangement from a square matrix $\mathcal{A}$ to a rectangular matrix $\mathcal{K}$ where the hyperplanes are the left-kernel of its columns. However, the intersection poset has to be defined carefully. We define the poset to be the set of intersections of at most $\ell$ many hyperplanes ordered by reverse inclusion. The vertices of this poset are labelled by the subspace formed by the intersection of hyperplanes.

The matrix $\mathcal{K}$ being a rectangular basis-matrix of size $\ell\times 2\ell$ is a proper subspace of $\mathcal{V}$ with the vectors written as rows. Thus, in our context, $\mathcal{V}$ is of dimension $2\ell$ and $\mathcal{K}$ is of dimension $\ell$.
Our main interest is in the intersection of $\ell$ hyperplanes from the set of all hyperplanes $\left\{\mathcal{K}_i\right\}$ for $i$ in $\{1,2,\ldots,2\ell\}$. The intersection of $\ell$ hyperplanes is the \emph{ultimate intersection}, and the intersection of $\ell-1$ hyperplanes will be called a \emph{penultimate intersection}. One can define a subarrangement of an arrangement in an obvious way. So, instead of using all the columns of $\mathcal{K}$, we can take a subset of columns of $\mathcal{K}$ and the corresponding left-kernels will form a subarrangement. Needless to say, we are interested in subarrangements of $\ell$ hyperplanes of $\mathcal{K}$.  

We briefly review the concept of \emph{general position} for intersection posets (see,~\cite[Page 287]{stanley}). It says that a set $\left\{H_1,H_2,\ldots,H_t\right\}$ of hyperplanes is in general position, if $\dim(H_1\cap H_2\cap\cdots\cap H_t)= \ell -t$ whenever $t\leqslant\ell$. Furthermore, if $t>\ell$, the dimension of the intersection is $0$. Here $\ell$ is the size of the arrangement. Translated in our context of finding a zero minor, if we find a set of $t$ hyperplanes that are \emph{not in general position}, we have found at least one zero minor. This is because those columns are linearly dependent, and a minor containing those columns will be a zero minor. In our experiments, we have not seen this happen, except when $t=\ell$. We do not check for general position in the rest of this work or in our algorithms. However, that can be easily implemented.

A hyperplane arrangement is called \emph{central} if there is an ultimate intersection that is non-zero; we will refer to this as a non-trivial ultimate intersection. Thus, in a central arrangement for $\mathcal{K}$ there is a non-zero vector $v\in\mathcal{V}$, such that, $v\in\cap_{i\in\Upsilon} \mathcal{K}_i$ for some subset $\Upsilon$ of $[2\ell]$ of cardinality $\ell$. We call this vector $v$ a central element of the hyperplane arrangement. Next theorem is about a square matrix. 
\begin{theorem}\label{thm1}
In a hyperplane arrangement $\left\{\mathcal{K}_i\right\}_{i=1}^n$ for a $n\times n$ matrix $\mathcal{A}$ over the field $\mathbf{F}$ the following are equivalent:
\begin{description}
\item[a)] The matrix $\mathcal{A}$ is a singular matrix.
\item[b)] There is a nonzero vector $v$ of length $n$ over $\mathbf{F}$ such that $v\mathcal{A}=0$.
\item[c)] The ultimate intersection $\bigcap\limits_{i=1}^n\mathcal{K}_i\neq 0$. 
\end{description}
\end{theorem}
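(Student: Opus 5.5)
I will prove the chain a) $\Leftrightarrow$ b) $\Leftrightarrow$ c) directly from linear algebra over the field $\mathbf{F}$. Throughout I take the paper's convention that vectors are rows and that $\mathcal{K}_i$ is the left-kernel of the $i$th column $c_i$. Concretely,
\[
\mathcal{K}_i=\{v\in\mathbf{F}^n : v c_i = 0\}.
\]
Under the standing assumption that each $c_i$ has rank one, i.e.\ is nonzero, each $\mathcal{K}_i$ is a genuine hyperplane of dimension $n-1$.

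The key observation, which I will record first, is an identity of sets: for a row vector $v$, the product $v\mathcal{A}$ is the row vector $(vc_1, vc_2,\ldots, vc_n)$. Hence $v\mathcal{A}=0$ if and only if $vc_i=0$ for every $i$, that is, if and only if $v\in\bigcap_{i=1}^n\mathcal{K}_i$. So the ultimate intersection is exactly the left null space of $\mathcal{A}$. This identity immediately gives b) $\Leftrightarrow$ c): a nonzero $v$ with $v\mathcal{A}=0$ is precisely a nonzero element of $\bigcap_i\mathcal{K}_i$.

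For a) $\Leftrightarrow$ b) I will use the standard fact that over a field a square matrix is singular if and only if its rows are linearly dependent. Equivalently, by rank–nullity applied to the map $v\mapsto v\mathcal{A}$ on $\mathbf{F}^n$, the matrix has rank less than $n$ if and only if this map has nonzero kernel. Here $\det\mathcal{A}=0$ is equivalent to $\operatorname{rank}\mathcal{A}<n$, since a square matrix is invertible exactly when its determinant is nonzero. A nonzero kernel vector $v$ is exactly a nontrivial linear relation $\sum_j v_j r_j=0$ among the rows $r_j$.

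There is no real obstacle; the argument is routine. The only point needing care is bookkeeping of the conventions: the hyperplanes are \emph{left}-kernels of \emph{columns}, so the relevant null space is the left null space, and one should use that row rank equals column rank if one prefers to argue via column dependence. I would also remark that the rank-one assumption on the columns is not needed for the equivalence itself; it is needed only for each $\mathcal{K}_i$ to be a hyperplane rather than all of $\mathbf{F}^n$.
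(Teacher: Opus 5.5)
Your proof is correct, but it takes a different route from the paper's.

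The paper proves the cycle a) $\Rightarrow$ b) $\Rightarrow$ c) $\Rightarrow$ a) through Gaussian elimination. It builds a matrix $\mathrm{T}$ whose successive rows are nonzero vectors drawn from the nested intersections $\mathcal{K}_1\supseteq\mathcal{K}_1\cap\mathcal{K}_2\supseteq\cdots$, so that $\mathrm{T}\mathcal{A}$ is upper triangular. Singularity then appears as a zero last row of $\mathrm{T}\mathcal{A}$, and the corresponding row of $\mathrm{T}$ is the central element $v$. The step c) $\Rightarrow$ a) is only sketched from the same construction.

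You instead isolate the single identity that does the work, $\bigcap_i\mathcal{K}_i=\{v : v\mathcal{A}=0\}$. The paper uses this only implicitly, in its b) $\Rightarrow$ c) step. With it, b) $\Leftrightarrow$ c) is immediate, and a) $\Leftrightarrow$ b) is the standard determinant/rank–nullity fact.

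Your route gains brevity and rigor. It makes c) $\Rightarrow$ a) airtight and avoids the index bookkeeping of the flag construction. As literally written in the paper, taking the $i$th row of $\mathrm{T}$ from $\mathcal{K}_1\cap\cdots\cap\mathcal{K}_i$ would also kill the diagonal entry; the intended choice is the intersection of the first $i-1$ hyperplanes.

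The paper's route gains a geometric picture: the chain of intersections in the intersection poset. It reuses this picture immediately afterwards to explain how a central element shortens maximal chains.

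Your remark that the rank-one hypothesis is irrelevant to the equivalence itself is also correct.
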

\begin{proof}
The proof revolves around the fact, that a singular matrix $\mathcal{A}$, the corresponding hyperplane arrangement $\left\{\mathcal{K}_i\right\}_{i=1}^n$ is central. One way to think about it, is to construct these $n$ intersections $\mathcal{K}_1,\mathcal{K}_1\cap\mathcal{K}_2,\ldots,\mathcal{K}_1\cap\mathcal\cdots\cap\mathcal{K}_n$. Then compute a matrix $\mathrm{T}$ whose first row is a non-zero element from $\mathcal{K}_1$, the second row is a non-zero element from the second intersection $\mathcal{K}_1\cap\mathcal{K}_2$ and continue selecting non-zero elements from respective intersections till the $n^\texttt{th}$ row which is a non-zero element from $\mathcal{K}_1\cap\mathcal{K}_2\cap\cdots\cap\mathcal{K}_n$. This construction follows: using the Gaussian elimination algorithm to transform $\mathcal{A}$ to an upper-triangular matrix using row operations. In other words, $\mathrm{T}\mathcal{A}$ is an upper-triangular matrix. Furthermore, when $\mathcal{A}$ is singular, the last row is zero in this reduced upper triangular matrix, and the last row of $\mathrm{T}$ is non-zero. This last row of $\mathrm{T}$ is a scalar multiple of $v$.
This proves the existence of a non-zero $v\in\mathcal{K}_1\cap\mathcal{K}_2\cdots\cap\mathcal{K}_n$. This $v$ is a central element of the arrangement. Furthermore, $v\mathcal{A} = 0$. This is because $v$ belongs to the left-kernel of all columns of $\mathcal{A}$. This proves (a) implies (b).

Similarly, (b) implies (c) follows from the fact that $v\mathcal{A}=0$ says that all hyperplane intersections are non-zero. Similarly, (c) implies (a) follows from the fact that $\mathrm{T}$ can be constructed as described above with no zero rows.
\end{proof} 
In other words, finding a zero minor is finding a central subarrangement of size $\ell$ in the intersection poset of $\mathcal{K}$. In the arrangement of $\mathcal{K}$, we have $\hat{0}=\mathcal{V}$ as the minimal element of the intersection poset, which is the empty intersection of hyperplanes. The atoms of this poset are the hyperplanes. However, this poset might not have a $\hat{1}$ -- the maximal element. However, one can introduce a $\hat{1}$ in the intersection poset of $\mathcal{K}$ as the top element.

As we saw earlier, an arrangement being central is the same as the existence of a central element $v$. This $v$ will also be the intersection of all penultimate intersections. Thus, this will make some maximal chains in the poset shorter by one unit than they would have been if there were no central element. The intersection poset of $\mathcal{K}$ looks different when $\mathcal{K}$ has a zero minor compared to when it does not. If the minor under consideration is non-singular, then its intersection poset will be a Boolean lattice with a $\hat{1}$. However, if it is singular, the top of the poset will not exist as $v$ will generate the ultimate and all penultimate intersections. Thus, finding a zero minor is to look for this anomaly and find its properties in the intersection poset of $\mathcal{K}$.

The next theorem is vital to the algorithm that we describe later. In this theorem, we define the \textbf{signature} of an intersection of a hyperplane $\mathrm{H}$ with a subspace $\mathrm{T}$. Recall, the \emph{span of a matrix is its row-span}. This means that we write the vectors of $\mathcal{V}$ as rows in a matrix.
\begin{theorem}\label{thm2}
Let $\mathrm{H}$ be a basis-matrix of a hyperplane in a vector space $\mathcal{V}$ of dimension $\mathtt{n}$. Let $\mathrm{T}$ be a basis-matrix of a subspace of dimension $\mathtt{k}$ of $\mathcal{V}$, where $0 < \mathtt{k} < n$. Then there is a vector of the form $\mathtt{s}=\left(s_1,s_2,\ldots,s_{k}\right)$, where the first non-zero element in $\mathtt{s}$ is $1$. If $\mathrm{S}$ is a basis-matrix of the left-kernel of the transpose of $\mathtt{s}$, then $\mathrm{S}\mathrm{T}$ is a basis-matrix of the intersection of the subspace spanned by $\mathrm{H}$ and the subspace spanned by $\mathrm{T}$. The vector $\mathtt{s}$ is the signature of this intersection and can be $0$.
\end{theorem}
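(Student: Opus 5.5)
Since $\mathrm{H}$ is a hyperplane, it is the left-kernel of a single nonzero column vector $h$ of length $\mathtt{n}$, i.e.\ $\mathrm{H}=\{x\in\mathcal{V} : xh=0\}$. This is exactly how hyperplanes arise in Section~\ref{hyperplane}, as left-kernels of columns. I would work with $h$ rather than with the basis-matrix $\mathrm{H}$, and take as candidate signature the column vector $\mathrm{T}h$, written as a row of length $\mathtt{k}$:
\[
\mathtt{s} := (\mathrm{T}h)^{\top}, \qquad s_i = t_i h ,
\]
where $t_1,\ldots,t_{\mathtt{k}}$ are the rows of $\mathrm{T}$. If $\mathtt{s}\neq 0$, I rescale it so that its first non-zero entry is $1$. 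Rescaling does not change the left-kernel of $\mathtt{s}^{\top}$, so nothing below depends on this normalization.

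The key computation is the following. A vector of the row-span of $\mathrm{T}$ has the form $y\mathrm{T}$ with $y\in\mathbf{F}^{\mathtt{k}}$. Since $\mathrm{T}$ has full row rank $\mathtt{k}$, the map $y\mapsto y\mathrm{T}$ is injective. Moreover
\[
y\mathrm{T}\in \mathrm{H} \iff (y\mathrm{T})h=0 \iff y\,\mathtt{s}^{\top}=0 ,
\]
that is, $y$ lies in the left-kernel of the column $\mathtt{s}^{\top}$. Hence the intersection of the span of $\mathrm{H}$ with the span of $\mathrm{T}$ is the image under $y\mapsto y\mathrm{T}$ of the left-kernel of $\mathtt{s}^{\top}$. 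If $\mathrm{S}$ is a basis-matrix of that left-kernel, then the rows of $\mathrm{S}\mathrm{T}$ span the intersection. They are also linearly independent, because $\mathrm{S}$ has independent rows and right multiplication by $\mathrm{T}$ is injective on row vectors. So $\mathrm{S}\mathrm{T}$ is a basis-matrix of the intersection, as claimed.

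Two cases remain. If $\mathtt{s}=0$, then every row of $\mathrm{T}$ is orthogonal to $h$, so the span of $\mathrm{T}$ lies inside $\mathrm{H}$. In this case the left-kernel of $\mathtt{s}^{\top}$ is all of $\mathbf{F}^{\mathtt{k}}$, we may take $\mathrm{S}=I_{\mathtt{k}}$, and the intersection is the span of $\mathrm{T}$ itself; this is why the statement allows $\mathtt{s}$ to be $0$. If $\mathtt{s}\neq0$, then $\mathrm{S}$ has size $(\mathtt{k}-1)\times\mathtt{k}$ and the intersection has dimension $\mathtt{k}-1$, which is the usual dimension drop; this matches the $(\mathtt{d}-1)\times\mathtt{d}$ kernels mentioned in the overview.

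The only delicate point is setting up the statement precisely. The theorem is phrased using the basis-matrix $\mathrm{H}$, while the construction needs the normal vector $h$. I would therefore first note that $h$ can be recovered from $\mathrm{H}$ as a basis of its right-kernel (the right-kernel of $\mathrm{H}$ is one-dimensional), and that $h$ is unique up to scalar. A rescaling of $h$ changes $\mathtt{s}$ only by a scalar, so after the normalization to a leading $1$ the signature is well defined given $\mathrm{T}$. Everything after that is routine linear algebra.
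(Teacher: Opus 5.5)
Your proof is correct, and it reaches the result by a somewhat different route than the paper. The paper never names the normal vector. It row-reduces the augmented matrix $\mathrm{U}=[\mathrm{H}^{\top}\mid\mathrm{T}^{\top}]$ and takes $\mathtt{s}$ to be the $\mathrm{T}$-block of the last row, whose first $\mathtt{n}-1$ entries vanish. It then reads $y\,\mathtt{s}^{\top}=0$ as the consistency condition for $\mathrm{H}^{\top}x=(y\mathrm{T})^{\top}$, justifies both inclusions by appeal to that RREF, and leaves the independence of the rows of $\mathrm{S}\mathrm{T}$ to the corollary. You instead get the closed form $\mathtt{s}\propto(\mathrm{T}h)^{\top}$. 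Both inclusions then follow from the single equivalence $y\mathrm{T}h=0\iff y\,\mathtt{s}^{\top}=0$, and independence comes from injectivity of $y\mapsto y\mathrm{T}$. The two signatures coincide: the last row of the reduced $\mathrm{U}$ is $w\mathrm{U}$ for a nonzero row vector $w$ with $w\mathrm{H}^{\top}=0$, so $w$ is a multiple of $h^{\top}$ and its $\mathrm{T}$-block is a multiple of $(\mathrm{T}h)^{\top}$. Add this line if you want your $\mathtt{s}$ to be literally what \textsc{create-signature} returns. The paper's version works with the data exactly as the statement supplies it (a basis matrix $\mathrm{H}$) and mirrors the algorithm. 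Yours is more transparent and, in the paper's setting, cheaper. There each hyperplane is the left-kernel of a known column $c$ of $\mathcal{K}^\prime$, so $h=c$ and every signature is just $(\mathrm{T}c)^{\top}$ up to normalization, with no RREF needed. This also explains the paper's remark that signatures of sparse columns are free: $c$ is then a standard basis vector, so $\mathtt{s}$ is simply a row of $\mathrm{T}^{\top}$.
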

\begin{proof}
Let the rows of $\mathrm{H}$ be $\{h_1,h_2,\ldots,h_{n-1}\}$ which are a basis of the span of $\mathrm{H}$. Then any element 
$\mathtt{h}\in\mathrm{H}$ is of the form $\alpha_{1}h_1 + \alpha_{2}h_2+\cdots+\alpha_{n-1}h_{n-1}$, where $\alpha_i\in\mathbf{F}$. Similarly, any element $\mathtt{t}$ in the span of $\mathrm{T}$ can be written as $\beta_1t_1 + \beta_2t_2 + \cdots + \beta_kt_k$ where $t_i$ are the rows of $\mathrm{T}$ for $1\leq i\leq k$, $\beta_i\in\mathbf{F}$. Then 
$\mathtt{t}$ is an arbitrary element of the span of $\mathrm{T}$. Now consider the matrix $\mathrm{U}$ which is formed by the transpose of $\mathrm{H}$ followed by the transpose of $\mathrm{T}$. Thus $\mathrm{U}$ has 
$\mathtt{n}$ rows and $\mathtt{n}-1+\mathtt{k}$ columns.

Now we compute a RREF on $\mathrm{U}$. We will end up with a matrix with the last row made of $\mathtt{n}-1$ zeros followed by the signature $\mathtt{s} = \left(s_{1},s_2,\ldots,s_{k}\right)$. A row operation ensures that the first non-zero element of $\mathtt{s}$ is $1$. 

We need to show that span$(\mathrm{S}\mathrm{T})=\textrm{span}(\mathrm{H})\bigcap\textrm{span}(\mathrm{T})$. Let $b\in\textrm{span}(\mathrm{S}\mathrm{T})$. Then clearly $b\in\textrm{span}(\mathrm{T})$, and we only need to prove that $b\in\textrm{span}(\mathrm{H})$. Since each row $b$ of the matrix $\mathrm{ST}$ is a basis, it is enough to prove when $b$ is a row of $\mathrm{ST}$. Then $b = s_{i1}t_1+s_{i2}t_2+\cdots+s_{ik}t_k$ where $\left(s_{i1},s_{i2},\ldots,s_{ik}\right)$ is the $i^\texttt{th}$ row of $\mathrm{S}$. Recall $\mathrm{U}$ and the RREF and the $\left(s_{i1},s_{i2},\ldots,s_{ik}\right)$ is in the kernel of the signature. This says, that there is a solution to the equation 
$\mathrm{H}^{\textrm{T}}x = b$ and thus $b\in\textrm{span}(\mathrm{H})$.

Conversely, assume that $b$ is in the intersection. Then $\mathrm{H}^{\textrm{T}}x = b$ has a solution. Furthermore, if $b = \beta_{1}t_1+\beta_{2}t_2+\cdots+\beta_{k}t_k$ where $\beta_i\in\mathbf{F}$, then $\left(\beta_1, \beta_2, \ldots, \beta_k\right)$ must belong to the kernel of the signature $\left(s_1,s_2,\ldots,s_k\right)$. For this, recall $\mathrm{U}$ and then the RREF on that. 
\end{proof}
\begin{corollary}
Using notations from the above theorem, if the signature $\mathtt{s}\neq 0$, the span of the matrix $\mathrm{ST}$ is of dimension $\mathtt{k}-1$. If $\mathtt{s} = 0$, the span of $\mathrm{T}$ is a subspace of the span of $\mathrm{H}$.
\end{corollary}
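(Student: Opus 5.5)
The plan is to read both claims off the identity $\mathrm{span}(\mathrm{ST})=\mathrm{span}(\mathrm{H})\cap\mathrm{span}(\mathrm{T})$ from Theorem~\ref{thm2}, together with the fact that the rows of $\mathrm{T}$ are linearly independent. Since $\mathrm{T}$ is a basis-matrix of a $\mathtt{k}$-dimensional subspace, the map $x\mapsto x\mathrm{T}$ from $\mathbf{F}^{\mathtt{k}}$ to $\mathcal{V}$ is injective. Hence for any matrix $\mathrm{S}$ with $\mathtt{k}$ columns,
\[
\dim\mathrm{span}(\mathrm{ST})=\dim\mathrm{span}(\mathrm{S}).
\]
So in both cases the work reduces to determining the row-span of $\mathrm{S}$, the basis-matrix of the left-kernel of $\mathtt{s}^{\mathrm{T}}$. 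That left-kernel is $\{x\in\mathbf{F}^{\mathtt{k}} : x\cdot\mathtt{s}=0\}$.

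\textbf{Case $\mathtt{s}\neq 0$.} The condition $x\cdot\mathtt{s}=0$ is a single nontrivial linear equation on $\mathbf{F}^{\mathtt{k}}$, since the column $\mathtt{s}^{\mathrm{T}}$ has rank one. By rank--nullity its solution space has dimension $\mathtt{k}-1$. Thus $\mathrm{S}$ has $\mathtt{k}-1$ independent rows, and by injectivity so does $\mathrm{ST}$; its span therefore has dimension $\mathtt{k}-1$. As a sanity check, this agrees with the general inequality $\dim(\mathrm{H}\cap\mathrm{T})\geq \mathtt{k}+(\mathtt{n}-1)-\mathtt{n}=\mathtt{k}-1$.

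\textbf{Case $\mathtt{s}=0$.} Every $x\in\mathbf{F}^{\mathtt{k}}$ satisfies $x\cdot\mathtt{s}=0$, so the left-kernel is all of $\mathbf{F}^{\mathtt{k}}$. We may take $\mathrm{S}$ to be the identity, which gives $\mathrm{ST}=\mathrm{T}$. Theorem~\ref{thm2} then gives $\mathrm{span}(\mathrm{T})=\mathrm{span}(\mathrm{H})\cap\mathrm{span}(\mathrm{T})$, that is, $\mathrm{span}(\mathrm{T})\subseteq\mathrm{span}(\mathrm{H})$.

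The only step needing care is the injectivity of $x\mapsto x\mathrm{T}$, which is needed to transfer dimensions from $\mathrm{S}$ to $\mathrm{ST}$. Here it is immediate from $\mathrm{T}$ being a basis-matrix, so this is not a real obstacle. I would also remark that the dichotomy is exhaustive: the intersection of a hyperplane with $\mathrm{span}(\mathrm{T})$ has dimension either $\mathtt{k}$ or $\mathtt{k}-1$, and the signature detects which case holds.
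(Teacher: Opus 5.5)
Your proof is correct and follows the same argument as the paper. The paper also uses the trivial left-kernel of $\mathrm{T}$ (your injectivity of $x\mapsto x\mathrm{T}$) together with the linear independence of the rows of $\mathrm{S}$, and takes $\mathrm{S}$ to be the identity when $\mathtt{s}=0$; you are slightly more explicit, using rank--nullity to show $\mathrm{S}$ has exactly $\mathtt{k}-1$ rows (the paper's proof indexes them $c_1,\ldots,c_{\mathtt{k}}$) and invoking Theorem~\ref{thm2} directly to get $\mathrm{span}(\mathrm{T})\subseteq\mathrm{span}(\mathrm{H})$.
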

\begin{proof}
A linear combination of rows of $\mathrm{ST}$ is $\alpha_1c_1\mathrm{T} + \alpha_2c_2\mathrm{T}+\cdots+\alpha_{\mathtt{k}}c_\mathtt{k}\mathrm{T}$ where $c_1,c_2,\ldots,c_\mathtt{k}$ are the rows of $\mathrm{S}$ and $\alpha_i$ are scalars. Thus $\alpha_1c_1\mathrm{T} + \alpha_2c_2\mathrm{T}+\cdots+\alpha_{\mathtt{k}}c_\mathtt{k}\mathrm{T}= 0$ implies that $\alpha_1c_1+\alpha_2c_2+\cdots+\alpha_kc_k$ is in the left-kernel of $\mathrm{T}$. Since $\mathrm{T}$ is a basis matrix of size $\mathtt{k}\times \mathtt{n}$ the left-kernel is $0$ and the fact that $c_1,c_2,\ldots,c_{\mathtt{k}}$ are linearly independent, implies $\alpha_i = 0$ for all $i$. This proves that the row-span of $\mathrm{ST}$ is of dimension $\mathtt{k}-1$.  

If $\mathrm{s} = 0$, then $\mathrm{S}$ is the identity matrix and the assertion follows.
\end{proof}
The above theorem is useful for the algorithm we develop. In the determine step, we will start with a matrix $\mathcal{K}^\prime$ of size $\ell^\prime\times\ell$ where $\ell = 2\ell^\prime$. This matrix is in the anti-diagonal format.
From this matrix, $\ell^\prime - \mathtt{d}$ columns from the left-half is extracted, and a submatrix thus formed is called $\textsc{M}$. The subspace $\mathrm{T}$ in the above theorem is the left-kernel of $\textsc{M}$. Clearly, the matrix $\textsc{M}$ is of size $\ell^\prime\times(\ell^\prime-\mathtt{d})$ and the dimension of $\mathrm{T}$ is $\mathtt{d}$. 

In the determine part of the algorithm, we want to extend these columns of $\textsc{M}$ to form a central ultimate hyperplane arrangement in $\mathcal{K}^\prime$. To do this, we could check each hyperplane of $\mathcal{K}^\prime$ computed from columns that are not in $\textsc{M}$. But the above theorem gives us an efficient way to do that. The signature $\mathtt{s}$ in the above theorem can be easily computed from the RREF of the hyperplane and $\mathrm{T}$, and can be considered as a representative of their intersection.

Moreover, the kernel of the signature behaves well with the intersection. In other words, since the dimension of $\mathrm{T}$ is $\mathtt{d}$, $\mathrm{ST}$ will have dimension $\mathtt{d}-1$ as a subspace, where $\mathrm{S}$ is the kernel of $\mathtt{s}$ and the kernel of two signatures when multiplied to $\mathrm{T}$ will be a subspace of dimension $\mathtt{d}-2$ and the kernel of $\mathtt{d}-1$ signatures when multiplied to $\mathrm{T}$ will be a subspace of dimension $1$ -- a vector. Since the subspace $\mathrm{T}$ is fixed, instead of multiplying $\mathrm{T}$ with the kernel of signatures, we can use the signatures and their kernels as representatives of the intersection of the hyperplanes with $\mathrm{T}$. This is particularly relevant in the $\mathtt{d}-1$ case where the kernel of the signatures becomes one-dimensional. Then the intersection of $\mathtt{d}-1$ hyperplanes and $\mathrm{T}$ can be represented by a signature, which is a vector of size $\mathtt{d}$, and by making the first non-zero entry of this vector one, it is even unique.
 
\subsection{Creating the Signature-Matrix} A signature is a row-vector of length $\mathtt{d}$, which is much smaller than the size of vectors in $\mathrm{T}$ or $\mathrm{H}$ which is $\ell^\prime$. This provides a major computational advantage. Once we have the signatures, we will create the signature-matrix for the left-kernel $\mathrm{T}$ of $\textsc{M}$. In a signature-matrix, we will stack all signatures of all hyperplanes from all columns of $\mathcal{K}^\prime$, which are not in $\textsc{M}$, as rows. This matrix will be a tall but slim matrix with $\mathtt{d}$ columns and $\ell^\prime+\mathtt{d}$ rows. It is straightforward to argue that if we have a zero minor in this signature-matrix, we have a zero minor in the original matrix $\mathcal{K}$. However, in practice, we will not compute $\mathtt{d}\times\mathtt{d}$ minors of the signature matrix, which corresponds to an ultimate intersection in $\mathcal{K}^\prime$. Instead, we will compute all possible penultimate intersections, which is the kernel of matrices of size $(\mathtt{d}-1)\times\mathtt{d}$ from the signature-matrix. These kernels will be one-dimensional. We can store this generator vector in an array and check for duplicates in that array. Duplicate penultimate intersections will give rise to a zero minor in the signature-matrix. We can also use a hashtable in an obvious way by hashing the unique generator vector to speed up duplicate detection. Furthermore, we can do this check for duplicates even when creating the hashtable. Using this idea, we can stop at the first occurrence of a duplicate. This will stop the waste of resources, where we create the whole table and then check for duplicates.  

As we will see next, hyperplanes for columns in the sparse part are easy to compute, and computing their \textbf{signatures are free}! Thus, $\ell^\prime$ signatures in the signature-matrix are free to obtain and we actually need to compute only $\mathtt{d}$ of them by RREF.

Let $\mathscr{A}$ be a minor of $\mathcal{K}$. Except for one case, $\mathcal{A}$ will have columns from both the dense and the sparse part of $\mathcal{K}$. Let $c_1,c_2,\ldots,c_k$ be the columns from the sparse part of $\mathcal{K}$. Then $c_i$ is a column-vector of length $\ell$ with only one $1$ and the rest zero. The position of the one in $c_i$ is $(i,2\ell-i+1)$ in $\mathcal{K}$ for $i\in[\ell]$.
Then the left-kernel of a $c_i$ is easy to determine. It is the $\ell\times\ell$ identity matrix with one row deleted. The deleted row is the one that is the transpose of $c_i$. When we are talking about columns $c_1,c_2,\ldots,c_k$, the left-kernel of the submatrix of these columns are the $\ell\times\ell$ identity matrix with the $k$ rows deleted corresponding to each $c_i$. If $v$ belongs to this left-kernel, then $v$ is a linear sum of these rows and has $k$ zeros, exactly where each $c_i$ had a $1$ for $i$ in $\{1,2,\ldots,k\}$. This argument applies to $\mathcal{K}^\prime$ as well, when it is in anti-diagonal format. Now let us look at the columns of $\mathcal{A}$ from the dense part of $\mathcal{K}$. There will be $\ell - k$ of those. While computing signatures from these hyperplanes coming from (sparse) columns, we note that 
RREF is then just a row exchange of the transpose of $\mathrm{T}$. To see this, note that the hyperplane for the column $c_i$ where $i\in[\ell,2\ell]$ is the $\ell\times\ell$ identity matrix with the $\ell-i+1$ row removed. Then, for the RREF, we first have to transpose it and then send this row to the bottom. That will not be in RREF form but will be good enough for our purpose. Then that exchanges the last row of the transpose of $\mathrm{T}$ with the $\ell-i+1$ row. Thus, to compute the signature of the hyperplane from $c_i$ from the sparse part of $\mathcal{K}$, we just have to find the $\ell-i+1$ row of the transpose of the basis matrix of the left-kernel $\mathrm{T}$. This argument applies \emph{ad verbatim} for $\mathcal{K}^\prime$ when it is in the anti-diagonal format. 

In the algorithm below, we choose a subset $\mathtt{b}$ of size $\ell^\prime -\mathtt{d}$ from $[\ell^\prime]$.
Then we have to compute the RREF for hyperplanes that are in $[\ell^\prime]\smallsetminus(b)$ and then stack the whole basis matrix of the transpose of $\mathrm{T}$ under it. Then, when a duplicate kernel of submatrices of size $(\mathtt{d}-1)\times\mathtt{d}$ of the signature-matrix is found, we have found a zero-minor in the signature matrix. The indices of the rows of the zero-minor in the signature-matrix are converted to column indices of $\mathcal{K}^\prime$ at the post-processing stage (Algorithm~\ref{algo2}). This section also caters to the \textsc{create-signature} and \textsc{signature-mat} from $\mathcal{K}^\prime$ in Algorithm~\ref{alg1} leading to the creation of the matrix $\mathrm{A}$ which is of size $(\ell^\prime+\mathtt{d})\times\mathtt{d}$. Then the table $\mathrm{A}^\prime$ is the hashtable, which is derived from $\mathrm{A}$ and has a maximum length of $\binom{\ell+\mathtt{d}}{\mathtt{d}-1}$.

\section{An algorithm to find a zero minor}
Now we describe the main algorithm of this paper, which finds a zero minor in the rectangular matrix $\mathcal{K}$ of size $\ell\times 2\ell$. The first thing to do is apply the recursion once and reduce the size of the matrix by half. To do this, it is important that $\ell=2\ell^\prime$ is an even number. Since, the number of distinct random points on the elliptic curve selected to form the matrix $\mathcal{M}$ is our choice, this can be easily achieved. We call the reduced matrix 
$\mathcal{K}^\prime$, which is of size $\ell^\prime\times\ell$ and is in the anti-diagonal format. 

The next step of the algorithm uses ideas from hyperplane arrangements, where the hyperplanes are the left-kernel of the columns of $\mathcal{K}^\prime$. Recall that the intersection posets of these hyperplane arrangements are ordered by reverse inclusion.  We are trying to find a central element in this intersection poset. A central element corresponds to an array of size $\ell^\prime$ of columns of $\mathcal{K}^\prime$ whose hyperplanes have a non-zero intersection. We could do an exhaustive search of all possible combinations of all columns, but then there will be 
$\binom{\ell}{\ell^\prime}$ combinations, and which is of exponential growth. 

However, the above idea is not that useless. 
We switch back to $\mathcal{K}$ and show that we can develop a Pollard's rho algorithm where the random walk is on the set of penultimate intersections of $\ell -1$ hyperplanes of columns from $2\ell$ columns of $\mathcal{K}$. Note that the ultimate intersection corresponds to intersection of $\ell$ hyperplanes and a penultimate intersection corresponds to intersection of $\ell -1$ hyperplanes. Furthermore, when two penultimate intersections are the same subspace we have a non-zero ultimate intersection which leads to a zero-minor.
      
For an overview of Polard's rho we recommend Hoffstein et. al~\cite[Section 4.5]{hoffstein2014introduction}. The random walk in the set of penultimate hyperplane intersections are creating one penultimate intersection after another, which are one-dimensional subspace. Thus, they can be uniquely represented by a vector. Once there is a collision in this random walk, we have two penultimate hyperplanes with the same subspace, and the discrete logarithm problem is solved.

In the case of hyperplanes, the expected number of steps for a collision is $\sqrt{\binom{2\ell}{\ell -1}}$ and depends on the probability of success. We tabulated below the result of a small experiment, Table~\ref{tab:pollard-probability}. In which, we computed the logarithm base 2 of $\binom{2\ell}{\ell-1}$ and  we  compared it with the logarithm base 2 of the prime which we assume to be the size of the group of the elliptic curve. This way we compared Pollard's rho on penultimate intersections with the original Pollard's rho attack on an elliptic curve of same size. This comparison was made meaningful with the probability of success as described after Equation~\ref{defect_eqn}. For this experiment, corresponding to a prime $\mathtt{p}$, first we computed the integer $n^\prime$ to be the floor of $\log_2{\mathtt{p}}$. Then we reduced $n^\prime$ by dividing it with $5 + \mathtt{tt}$, where $\mathtt{tt} = 0.1\mathtt{t}$ and $\mathtt{t}\in\{0,1,\ldots,9\}$. Then $\ell=3n^\prime$ is the number of rows of $\mathcal{K}$ whose number of columns is $2\ell$. Thus, we made $\mathcal{K}$ smaller. Since, it is the probability that matters we are at liberty to change $\ell$. We did the same for all $\mathtt{tt}$.
We tabulated $\log_2\mathtt{p}$, the binomial $\log_2{\binom{2\ell}{\ell-1}}$ and the probability as a row in Table~\ref{tab:pollard-probability}, and then picked those rows that gave us a probability of $0.9$ or more. The table below shows that our Polard's rho algorithm on the set of penultimate intersections is comparable to that of the original Pollard's rho. Further research is needed for a proper implementation.
\begin{table}[htbp]
    \centering
    \caption{Selected parameters with the highest probability for values of $\log_2\binom{2\ell}{\ell-1}$ and $\log_2\mathtt{p}$. The data shows that the Pollard's algorithm on the space of penultimate hyperplane intersections is about the same as the original Pollard's rho algorithm.}
    \label{tab:pollard-probability}
    \scriptsize
    \begin{tabular}{rrr||@{\qquad}rrr}
        \toprule
        $\log_2 \mathtt{p}$ & $\log_2\binom{2\ell}{\ell-1}$ & Probability
        & $\log_2 \mathtt{p}$ & $\log_2\binom{2\ell}{\ell-1}$ & Probability \\
        \midrule
        102 & 104 & 0.99188 & 179 & 181 & 0.99939 \\
        108 & 110 & 0.99081 & 185 & 187 & 0.99931 \\
        114 & 116 & 0.98970 & 191 & 193 & 0.99924 \\
        120 & 122 & 0.98854 & 197 & 199 & 0.99915 \\
        126 & 128 & 0.98735 & 203 & 205 & 0.99906 \\
        132 & 134 & 0.98612 & 209 & 211 & 0.99897 \\
        138 & 140 & 0.98485 & 215 & 217 & 0.99887 \\
        144 & 146 & 0.98355 & 221 & 223 & 0.99877 \\
        150 & 152 & 0.98223 & 227 & 229 & 0.99866 \\
        155 & 157 & 0.99963 & 233 & 235 & 0.99854 \\
        161 & 163 & 0.99958 & 239 & 241 & 0.99842 \\
        167 & 169 & 0.99952 & 245 & 247 & 0.99830 \\
        173 & 175 & 0.99946 &     &     &         \\
        \bottomrule
    \end{tabular}
\end{table}

Now we describe the main algorithm of this paper and for this we move back to $\mathcal{K}^\prime$. For this we choose $\mathtt{d}$, a small positive integer, called the defect. For many of our experiments, we took $\mathtt{d}=4$ or $5$. However, a proper size for $\mathtt{d}$ is obtained from a probability table computed with the probability described after Equation~\ref{defect_eqn}. For an example see Table~\ref{probtable}.
Let $\mathtt{b}$ be a set of $\ell^\prime -\mathtt{d}$ columns from the dense part of $\mathcal{K}^\prime$. Let $\textsc{M}$ be the submatrix constructed from these columns of $\mathtt{b}$. We then compute the left-kernel of $\textsc{M}$ and denote it  by $\mathrm{T}$. The dimension of $\mathrm{T}$ is $\mathtt{d}$ under general position. This is one spot where we can parallelize the algorithm with different processes getting different $\mathtt{b}$. This type of parallelization is called a ridiculous parallelization, which requires no inter-process communication.

Now we need to find $\mathtt{d}$ hyperplanes, from the columns of $\mathcal{K}^\prime$ that are not in 
$\textsc{M}$, whose intersection among themselves and then with $\mathrm{T}$ is non-zero. This provides the existence of a central element, and then the columns in $\textsc{M}$ together with the columns whose hyperplanes had non-zero intersection give us a zero minor of $\mathcal{K}^\prime$, and then that can be extended to a zero minor in $\mathcal{K}$. 

\begin{figure}[htbp]\label{fig1}
\centering
\setlength{\fboxsep}{6pt}
\fbox{
\begin{tikzpicture}[
    box/.style={
        draw,
        rounded corners,
        very thick,
        minimum width=2.4cm,
        minimum height=1.5cm,
        align=center
    },
    circ/.style={
        draw,
        circle,
        very thick,
        minimum size=2.2cm,
        align=center
    },
    edge/.style={
        -{To[length=2mm,width=2mm]},
        very thick
    }
]

\node[circ] (B1) at (0,0) {$\textsc{DLP}$\\ solver};

\node[box] (B2) at (4,0) {$\mathcal{K}$ \\size $\ell\times 2\ell$};
\node[box] (B3) at (8,0) {$\mathcal{K}^\prime=\mathcal{K}[\mathtt{a}]^\textsc{T}$ \\size $\ell^\prime\times \ell$};

\node[box] (B5) at (2,-4) {$\textsc{signature-matrix}$\\[1mm]$\mathrm{A}$ of size $(\ell^\prime+\mathtt{d})\times\mathtt{d}$
\\[1mm]$\textsc{duplicate-detection}$};
\node[box] (B4) at (7,-4) {$\textsc{M}=\mathcal{K}^\prime[\mathtt{b}]$\\size $\ell^\prime\times (\ell^\prime-\mathtt{d})$};


\draw[edge] (B2) -- node[above] {$\mathtt{a}$} 
				node[below] {$|\mathtt{a}|=\ell^\prime$}(B3);
\draw[edge]
    (B3) -- node[right] {$\mathtt{b}$} (B4);

\node[left=0.1cm of $(B3)!0.5!(B4)$,
      text width=2cm,
      align=right]
{
\small{parallelize over
$\mathtt{b}\in\binom{\ell^\prime}{\ell^\prime-\mathtt{d}}$
}};
\draw[edge] (B4) -- node[below] {$\mathrm{T}$} 
				node[above]{\scalebox{0.5}{$\dim(\mathrm{T})=\mathtt{d}$}}(B5);
\draw[edge] (B5) -- node[left] {$\mathtt{b}\cup B$} (B2);

\draw[edge]
    (B1) to[bend left=20]
    node[above] {$\mathcal{E}$}
    node[below] {$\mathrm{P},\mathrm{Q}$}
    (B2);

\draw[edge]
    (B2) to[bend left=20]
    node[below] {$\mathrm{m}$}
    (B1);
\end{tikzpicture}
}
\caption{\small
Schematic diagram of the algorithm. The index set $\mathtt{a}$ of size $\ell^\prime=\ell/2$ selects a subset of columns from the dense part of $\mathcal{K}$ to form $\mathcal{K}^\prime$. Then the index set $\mathtt{b}$ of size $\ell^\prime - \mathtt{d}$ selects a subset of columns from the dense part of $\mathcal{K}^\prime$ to form the tall matrix $\textsc{M}$. The subspace $\mathrm{T}$ is the left-kernel of $\textsc{M}$, which gives rise to the signature-matrix and then a duplicate in the hashtable yields $B$. 
One can parallelize this algorithm by choosing chunks of $\mathtt{b}$ and sending them to different processors for multiprocessing.
}
\end{figure}
In implementing this algorithm, we used the multiprocessing library in Python. We created all combinations of size 
$\ell^\prime -\mathtt{d}$ from the set $[\ell^\prime]$. Let $\mathtt{b}$ be one such combination.
There are $\binom{\ell^\prime}{\mathtt{d}}$ such subsets, which is a polynomial in $\ell^\prime$ and thus polynomial in $\ell$ for a fixed $\mathtt{d}$. Using the probability table Table~\ref{probtable}, one needs to choose the defect $\mathtt{d}$ so that there is a significant probability of success.
Then for each $\mathtt{b}$, we extract the submatrix $\textsc{M}$ and compute its left-kernel $\mathrm{T}$. We then compute the signature of $\mathrm{T}$ with all the remaining hyperplanes of $\mathcal{K}^\prime$ that are not in $\textsc{M}$ and create the signature-matrix. The size of the signature-matrix is $(\ell^\prime +\mathtt{d})\times{\mathtt{d}}$. Then to compute the hashtable, we compute $\binom{\ell^\prime+\mathtt{d}}{\mathtt{d-1}}$ kernels of matrices of size $(\mathtt{d}-1)\times\mathtt{d}$ which can be a large number. There are ways to speed things up. Use Gray codes to create combinations, where the combinations are ordered, and consecutive ones differ in only one place. Then use  rank-one replacement to compute those kernels. This is faster than computing those kernels independently. We can do a collision-search while creating the hashtable. In this case, after computing a hash, we check if that is already in the hashtable. If not, we put it in. Otherwise, the problem is solved and we return the repetitions.

The readers may find it confusing that we are switching from $\binom{\ell^\prime+\mathtt{d}}{\mathtt{d}-1}$ to $\binom{\ell^\prime+\mathtt{d}}{\mathtt{d}}$ frequently. The rule of thumb is, when we are dealing with the number of kernel computations in the signature-matrix it is the former and the later relates to probability computation.

\begin{table}[htbp]
    \centering
    \caption{Parameter choices with success probability greater than $0.9$. For each repeated value of $\lfloor\log_2 p\rfloor$, the row minimizing $\left\lfloor\log_2 \binom{\ell^\prime + \mathtt{d}}{\mathtt{d} - 1}\right\rfloor+\left\lfloor\log_2 \binom{\ell^\prime}{\mathtt{d}}\right\rfloor$ is shown. The data shows that our attack is no better than exhaustive search.}
    \label{prob-prod}
    \scriptsize
    \setlength{\tabcolsep}{3pt}
    \begin{tabular}{ccccc||@{\qquad}ccccc}
        \toprule
        $\lfloor\log_2 p\rfloor$ & $\mathtt{d}$ & $\left\lfloor\log_2 \binom{\ell^\prime + \mathtt{d}}{\mathtt{d} - 1}\right\rfloor$ & $\left\lfloor\log_2 \binom{l^\prime}{d}\right\rfloor$ & Prob. &
        $\lfloor\log_2 p\rfloor$ & $\mathtt{d}$ & $\left\lfloor\log_2 \binom{\ell^\prime + \mathtt{d}} {\mathtt{d} - 1}\right\rfloor$ & $\left\lfloor\log_2 \binom{\ell^\prime}{\mathtt{d}}\right\rfloor$ & Prob. \\
        \midrule
        100 & 17 & 53 & 48 & 0.98464 & 112 & 19 & 60 & 53 & 0.98205 \\
        101 & 18 & 55 & 49 & 1.00000 & 113 & 20 & 62 & 55 & 1.00000 \\
        102 & 18 & 55 & 49 & 0.99999 & 114 & 20 & 62 & 55 & 0.99998 \\
        103 & 18 & 55 & 49 & 0.99672 & 115 & 19 & 62 & 56 & 1.00000 \\
        104 & 18 & 55 & 49 & 0.94277 & 116 & 19 & 62 & 56 & 0.99876 \\
        105 & 17 & 55 & 51 & 0.95952 & 117 & 19 & 62 & 56 & 0.96485 \\
        106 & 18 & 57 & 52 & 1.00000 & 118 & 20 & 64 & 58 & 1.00000 \\
        107 & 18 & 57 & 52 & 0.99998 & 119 & 20 & 64 & 58 & 0.99998 \\
        108 & 18 & 57 & 52 & 0.99545 & 120 & 20 & 64 & 58 & 0.99535 \\
        109 & 18 & 57 & 52 & 0.93253 & 121 & 20 & 64 & 58 & 0.93178 \\
        110 & 19 & 60 & 53 & 1.00000 & 125 & 20 & 66 & 60 & 0.98227 \\
        111 & 19 & 60 & 53 & 0.99968 &     &    &    &    &         \\
        \bottomrule
    \end{tabular}
\end{table}

\subsection{Pseudocode for the attack algorithm} 
We now present the attack algorithm in pseudocode. The algorithm has two parts. The first one reduces the size of the input matrix by half using an array $\mathtt{a}$, and we call it the recursive step. The next step is to find a zero minor in the reduced matrix using central hyperplane arrangements. First, note that the complexity depends on the defect $\mathtt{d}$ which is fixed by Table~\ref{probtable}. This gives us the expected number of choices of $\mathtt{a}$ required for the success of our algorithm. 
\begin{algorithm}[H]
\small
\caption{Pseudo-code of the algorithm to find a zero minor in $\mathcal{K}$}
\label{alg1}
\begin{algorithmic}[1]
\State Input: a matrix $\mathcal{K}$ of size $\ell\times 2\ell$ and defect $\mathtt{d}$ much smaller than $\ell$. 
\LComment{Using finding a mate.}
	\State{$nr\gets\mathcal{K}.\mathrm{nrows}()$, $nc\gets\mathcal{K}.\text{ncols()}$}
	\State $\mathtt{a} \gets\text{array of length}\; nr/2$ from the set $[nr]$ and $a^\prime$ is its complement from the same set.
	\State $\mathcal{K}^\prime\gets\textrm{submatrix}\left(\mathcal{K}[\mathtt{a}]\right)$
	\Comment {$\mathcal{K}^\prime$ is a submatrix of $\mathcal{K}$ of columns with index from $a$.}
	\State $\mathcal{K}^\prime\gets\mathcal{K}^\prime.\mathrm{transpose}()$
	\State $\mathcal{K}^\prime\gets\textsc{reduce-anti-diagonal}\left(\mathcal{K}^\prime\right)$
	\Comment{Reduces the matrix to anti-diagonal format.}
	\State $\mathcal{K}\gets\mathcal{K}^\prime$.
\LComment{Using intersection poset of hyperplane arrangement.} 
	\State{$nr\gets\mathcal{K}.\mathrm{nrows}()$, $nc\gets\mathcal{K}.\text{ncols()}$}
	\State{$\text{comb}\gets\textsc{all-combinations}$ of $[nr]$ of size $nr - \mathtt{d}$}
\For{$\mathtt{b}$ in comb}
	\Comment{This step can be run in parallel, with each processor getting different $\mathtt{b}$}
	\State{$\textsc{mat}\gets\mathcal{K}[\mathtt{b}]$}; 
	\Comment{$\textsc{mat}$ is the submatrix of $\mathcal{K}$ with columns from $\mathtt{b}$}
	\State {$\mathrm{T}\gets\textsc{left-kernel(mat)}$};
	\Comment{$\dim(\mathrm{T})=\mathtt{d}$.}
	\State $\textsc{signature-mat}\gets\text{mat}(0,\mathtt{d})$
	\Comment{Initialize an empty matrix of zero row and $\mathtt{d}$ cols}
	\For{$i\in b^\prime$} 
	\Comment{$b^\prime=[nr]\smallsetminus{\mathtt{b}}$}
	\State $u\gets$\Call{create-signature}{$\mathrm{H}[i],\mathrm{T}$}
	\Comment{Signature of $\mathrm{T}$ with hyperplane of $i\texttt{th}$ column of $\mathcal{K}$}
	\State $\textsc{signature-mat} = \textsc{signature-mat}$ stack $u$
	\Comment{Adds $u$ at the bottom. }
	\EndFor
	\EndFor
	\State $\mathrm{A}\gets\textsc{signature-mat}\;\text{stack}\; \left(\mathrm{T}.\text{transpose()}\right)$	
	\State $\mathrm{A}^\prime\gets\textsc{create-intersection-signature}(\textrm{A},d)$
	\Comment{$\mathrm{A}^\prime$ is an array of hash-values.}
	\State search for repetition in $\mathrm{A}^\prime$
	\State Return {index of repetitions in $\mathrm{A}^\prime$}

\Procedure{create-signature}{$\mathrm{H},\mathrm{T}$}
\LComment{This creates the signature for the hyperplane $\mathrm{H}$ and the kernel $\mathrm{T}$}
\LComment{Both $H$ and $T$ are basis-matrix of size $(\ell - 1)\times \ell$ and $d\times \ell$ respectively.}
	\State Create a matrix $\mathrm{U}$ by stacking $\mathrm{H}.\text{transpose()}$ and $\mathrm{T}.\text{transpose()}$ horizontally. 
	\State $\mathrm{U}^\prime$ is the RREF of $\mathrm{U}$.
	\State Extract $u$, the last row of $\mathrm{U}^\prime$. 
	\State The vector $u^\prime$ is the last $\mathtt{d}$ elements in the row $u$.
	\State \Return $u^\prime$.
	\Comment{$u^\prime$ is a $1\times\mathtt{d}$ vector.}
\EndProcedure
\Procedure{create-intersection-signature}{$\mathrm{A}, \mathtt{d}$}
	\LComment{Creates signature for penultimate intersections.}
	\State $r\gets \mathrm{H}.\textrm{nrows}()$ and $d\gets\mathrm{H}.\textrm{ncols}()$
	\Comment{Number of rows and columns of $\mathrm{A}$.}
	\State $\texttt{Comb}\gets$ all possible combinations of $\{1,2,\dots,r\}$ of size $\mathtt{d}-1$.
	\State $\mathcal{L} =\textrm{dict}()$ 
	\Comment{Initialize $L$ as an empty dictionary.}	
	\For{$x\in \texttt{Comb}$}
		\State $\mathrm{mat}\gets \mathrm{A}[x]$
		\Comment{mat is the submatrix of $\mathrm{A}$ formed by the rows with index from $x$}
		\State $\mathrm{ker}\gets\textbf{right-kernel}(\mathrm{mat})$
		\Comment{$\mathrm{ker}$ is $1$ dimensional subspace}
		\State $\mathcal{L}[x]\gets\mathrm{hash}(v)$ 
		\Comment {Here $v$ is the generator with first non-zero element 1}
	\EndFor
	\State \Return $\mathcal{L}$
\EndProcedure
\end{algorithmic}
\end{algorithm}
The post-processing is performed once a repetition is found and is fairly easy to describe. There are two kinds of searches: one that stops at the first repetition, whereas the other finds all repetitions. We adopted the first kind in our experiments.  Though it is rare, but still a possibility that two different zero minors have the same central element. We have not observed this event in our computations.
\begin{algorithm}[ht]
\small
\caption{Post Processing}
\label{algo2}
\begin{algorithmic}[1]
\State Input: List of two combinations of size $\mathtt{d} -1$ written as set, $\mathtt{a}$ and $\mathtt{b}$.
\State {$nr\gets\mathcal{K}^\prime.nrows(), nc\gets\mathcal{K}^\prime.ncols()$}
\State {$b^\prime\gets[nr]\smallsetminus(b)$}\Comment{$b^\prime$ is the complement of $b$}
\State Take the union of two sets from input.
\If{cardinality of the union $=\mathtt{d}$}
\State{$\Gamma\gets$ union};\Comment{$\Gamma$ is an array of size $\mathtt{d}$}
\LComment{Now compute the indices in $\mathcal{K}^\prime$ corresponding to the union}
\State{$\textsc{B}=\emptyset$}\Comment{Create an empty list}
\For{$i\in\Gamma$}
	\If{$i<=\mathtt{d}$}
		\State{$\textsc{B}$.append$(b^\prime[i])$}
	\ElsIf{$i>\mathtt{d}$}
		\State{$\textsc{B}$.append$(nc - (i-\mathtt{d})+1)$}
	\EndIf
\EndFor
\State{$\mathtt{b}=\mathtt{b}\cup \textsc{B}$};\Comment{$\mathtt{b}$ are indices for a zero-minor in $\mathcal{K}^\prime$}
\LComment{Now compute indices for zero-minor in the input-matrix $\mathcal{K}$}
	\State{$\textsc{O}\gets[nc,nc+1,\ldots,2nc]$}\Comment{Initialize $\textsc{O}$}
	\For{$i\in \mathtt{b}$}
	\State{$\textsc{O}$.delete$(2\ast nc -i +1)$ for $i \in \texttt{b}$}
	\EndFor
	\State {$\mathtt{a}\gets a\cup\textsc{O}$}
	\State \Return{$\mathtt{a}$}
\LComment{$\mathtt{a}$ are indices of the columns of $\mathcal{K}$ that is a zero-minor.}
	
\ElsIf{cardinality $\neq\mathtt{d}$}
	\State\Return {No result}	
\EndIf
\end{algorithmic}
\end{algorithm}
There is a slight abuse in our introduction of $\mathcal{K}^\prime$ in the Algorithm~\ref{algo2}. The intended meaning is clear.
\subsection{Why does this algorithm work}
As we know from Theorem~\ref{thm1}, one way to find a zero minor in $\mathcal{K}$ is to find $\ell$ columns in $\mathcal{K}$ whose hyperplanes form a central hyperplane arrangement. Equivalently, find an array $\Upsilon$ of length $\ell$ from the set $[2\ell]$, such that, 
$\cap_{i\in\Upsilon}\mathcal{K}_i\neq 0$. Then there is a non-zero $v$, such that, $v\in\cap_{i\in\Upsilon}\mathcal{K}_i$. The algorithm that we developed is finding this $v$. Now clearly $v\in\cap_{i\in\Upsilon}\mathcal{K}_i$ is equivalent to the condition that $v\in\cap_{i\in\Upsilon^\prime}\mathcal{K}_i$, for every $\Upsilon^\prime$ a subset of $\Upsilon$ of size $\ell-1$. Then, an obvious thing to do is to take all $\ell-1$ subsets of the set $[2\ell]$ and check the intersection of the corresponding hyperplanes for repetitions. We focus on the left-half and the right-half of the rectangular matrix $\mathcal{K}$ simultaneously. This way, we divide the problem into two. 

We choose a combination $\mathtt{b}$ of size $\ell - \mathtt{d}$, where $\mathtt{d}$ is small. We call $\mathtt{d}$ the defect in $\mathtt{b}$. We define $\mathrm{T} = \cap_{i\in\mathtt{b}}\mathcal{K}_i$, where $\mathcal{K}_i$ is the hyperplane corresponding to the $i\in\mathtt{b}$. Assuming that the hyperplanes are in general position, the dimension of $\mathrm{T}$ is $\mathtt{d}$.

There is some similarity of our algorithm with the well known baby-step giant-step algorithm. The outer loop of our algorithm which loops over all possible choices of $\mathtt{b}$ is the giant-step and the inner loop that loops over all possible submatrices of size $\mathtt{d}-1$ in the signature-matrix is the baby step. However, we could not find a way to define the collision that reveals a central hyperplane intersection. We find it an interesting question, to examine this algorithm in the light of a collision search algorithm. Then the complexity changes drastically because of the birthday-paradox.

Now we need to find $\mathtt{d}$ hyperplanes from $\Delta=[2\ell]\smallsetminus(\mathtt{b})$, such that, the intersection of these hyperplanes with $\mathrm{T}$ is non-zero. This is the same as finding two different sets of size $\mathtt{d}-1$ from $\Delta$, such that the nonzero vector $v$ in the intersection of each set of $\mathtt{d} -1$ hyperplanes along with $\mathrm{T}$ is the same. Size of $\Delta$ is $\ell + \mathtt{d}$. If $\mathtt{d}$ is fixed, the number of choices ${\ell + \mathtt{d}}\choose{\mathtt{d}-1}$ is polynomial in $\ell$. So the total number of choices is ${{\ell}\choose{\mathtt{d}}}{{\ell + \mathtt{d}}\choose{\mathtt{d}-1}}$ which is polynomial in $\ell$. Now, for a fixed choice of $\mathtt{b}$, take all $\mathtt{d}-1$ choices from $\Delta$, and compute the signature using Theorem~\ref{thm2} for the penultimate intersections. The above argument works for $\mathcal{K}$ as well as $\mathcal{K}^\prime$. The above argument will be formalized below in the form of a theorem. 
\begin{theorem}
Let $\Upsilon, \Delta, \mathrm{T}$ and $\mathtt{d}$ be as defined earlier. Let $x$ be a $\mathtt{d}-1$ subset of $\Delta$. Then for every $i\in x$ there is a signature $\mathtt{s}_i$ as computed in Theorem~\ref{thm2}. Make a matrix $X$ with these signatures as rows. Thus the size of $X$ is 
$(\mathtt{d}-1)\times \mathtt{d}$. Assuming general position, the right-kernel of $X$ is one dimensional. Assume that it is generated by $v$. Then $\cap_{i\in \Upsilon\cup x}\mathcal{K}_i=\langle v\mathrm{T}\rangle$. Furthermore, by making the first non-zero element of $v$ one, $v$ is unique up to a scalar multiple. 
\end{theorem}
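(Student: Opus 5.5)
Here $\Upsilon$ plays the role of the index set $\mathtt{b}$, so that $\mathrm{T}=\cap_{i\in\Upsilon}\mathcal{K}_i$ has dimension $\mathtt{d}$. The plan is to transport everything into the coordinates of $\mathrm{T}$ and reduce the statement to linear algebra on $\mathbf{F}^{\mathtt{d}}$.

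First I fix the basis matrix $\mathrm{T}$ of size $\mathtt{d}\times n$. The map $\beta\mapsto\beta\mathrm{T}$ is an isomorphism from $\mathbf{F}^{\mathtt{d}}$ (row vectors) onto the span of $\mathrm{T}$; it is injective because the left-kernel of $\mathrm{T}$ is zero. For each $i\in x$, Theorem~\ref{thm2}, applied to the hyperplane $\mathcal{K}_i$ and to $\mathrm{T}$, gives the signature $\mathtt{s}_i$. Its proof shows more than the displayed conclusion: an element $\beta\mathrm{T}$ lies in $\mathcal{K}_i$ if and only if $\beta$ lies in the kernel of $\mathtt{s}_i$, that is, $\beta\,\mathtt{s}_i^{\mathrm{T}}=0$. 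I will extract exactly this membership criterion from the RREF argument on the matrix $\mathrm{U}$; it is the content of the ``conversely'' part of that proof together with the forward direction.

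With the criterion in hand, the intersection is computed directly. An element $w$ lies in $\cap_{i\in\Upsilon\cup x}\mathcal{K}_i$ if and only if $w\in\mathrm{T}$ and $w\in\mathcal{K}_i$ for every $i\in x$. This holds if and only if $w=\beta\mathrm{T}$ with $\beta\,\mathtt{s}_i^{\mathrm{T}}=0$ for all $i\in x$, which is to say $X\beta^{\mathrm{T}}=0$, i.e.\ $\beta^{\mathrm{T}}$ lies in the right-kernel of $X$. Applying the isomorphism, the intersection equals $\{\beta\mathrm{T} : \beta^{\mathrm{T}}\in\ker_{\mathrm{right}}X\}$. Under general position, the $\mathtt{d}-1$ signatures are linearly independent: each new hyperplane cuts the dimension down by one, as in the Corollary, and iterating gives dimensions $\mathtt{d}-1,\dots,1$. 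Therefore $X$ has rank $\mathtt{d}-1$ and its right-kernel is $\langle v\rangle$, which yields $\langle v\mathrm{T}\rangle$. The uniqueness claim is immediate, since a one-dimensional space has a unique generator whose first nonzero entry is $1$.

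The main obstacle is making the membership criterion rigorous. Theorem~\ref{thm2} is phrased via a basis $\mathrm{S}$ of the kernel of $\mathtt{s}$, and its proof treats the case $\mathtt{s}=0$ separately. I would handle this by noting that span$(\mathrm{ST})$ equals $\{\beta\mathrm{T}:\beta\mathtt{s}^{\mathrm{T}}=0\}$, since the rows of $\mathrm{S}$ span that kernel. In the degenerate case where some $\mathtt{s}_i=0$ or the rows of $X$ are dependent, general position is exactly the hypothesis that excludes it. A secondary check is the general-position dimension count: the intersection has dimension $\mathtt{d}-(\mathtt{d}-1)=1$, which is consistent with rank $X=\mathtt{d}-1$. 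I would also state explicitly that $v$ is regarded as a row vector when it is multiplied by $\mathrm{T}$.
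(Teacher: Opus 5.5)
Your proposal is correct and follows the paper's route: the general-position count $\dim\bigcap_{i\in\mathtt{b}\cup x}\mathcal{K}_i=1$, combined with identifying the right-kernel of $X$ with that penultimate intersection through the signature criterion of Theorem~\ref{thm2}. Your isomorphism $\beta\mapsto\beta\mathrm{T}$ makes that identification explicit where the paper only asserts it. Reading $\Upsilon$ as the index set $\mathtt{b}$ of size $\ell-\mathtt{d}$ is the only reading under which the statement holds together, even though the paper's proof writes $|\Upsilon|=\ell$; with the isomorphism in hand, the rank $\mathtt{d}-1$ of $X$ follows directly from that dimension count, so you need not iterate the Corollary.
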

\begin{proof}
Recall that $|\Upsilon| = \ell$. The principal argument is, assuming general position, the intersection of $t$ hyperplanes is a subspace of dimension $\ell-t$, where $0<t<\ell$. For a non-zero minor, the dimension of the intersection of $\ell-1$ hyperplanes is $1$ and the intersection of $\ell$ hyperplanes is $0$. For a zero-minor, the dimension of the intersection of $\ell$ hyperplanes, its ultimate intersection, is also one. When that happens, all the penultimate intersections are the same.

With this understanding, look at $X$ and its right-kernel. The right-kernel being penultimate intersection is one-dimensional $\langle v\rangle$. And, the same $v$ for two different $x$ leads to a one-dimensional ultimate intersection. The indices of these two different $x$ gives rise to a zero-minor.
\end{proof}
Now once we have a vector $v$ for every subset of size $\mathtt{d}-1$ of $\Delta$ we have an element $v\mathrm{T}\in\cap_{\Upsilon\cup x}\mathcal{K}_i$, where $x$ a subset of $\Delta$ of size $\mathtt{d}-1$. We then compute and store the hash of this \emph{representative} vector $v$ along with $x$; for all possible $x$ and call it the hashtable. If there is a repetition in the hash-table for $x$ and $x^\prime$, we can construct a set of size $\ell$ from these sets $x$, $x^\prime$ and $\Upsilon$ and the corresponding minor of $\mathcal{K}^\prime$ will be zero. This proves the following theorem:
\begin{theorem}
Let $\mathcal{K}$ be a matrix of size $\ell\times 2\ell$. If any two penultimate intersections of hyperplanes from the columns of $\mathcal{K}$ are the same as subspaces, then there is a zero minor in $\mathcal{K}$. The indices of that zero minor can be determined from the union of the indices of these two penultimate intersections which have the same representative nonzero vector.   
\end{theorem}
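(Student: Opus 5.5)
The plan is to reduce the statement to Theorem~\ref{thm1} applied to a suitable $\ell\times\ell$ submatrix of $\mathcal{K}$.

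\textbf{Setup.} Let $\Upsilon_1,\Upsilon_2\subset[2\ell]$ be two distinct index sets of size $\ell-1$. Assume the penultimate intersections
\[
L_1=\cap_{i\in\Upsilon_1}\mathcal{K}_i,\qquad L_2=\cap_{i\in\Upsilon_2}\mathcal{K}_i
\]
coincide as subspaces. The intersection of $\ell-1$ hyperplanes in $\mathbf{F}^\ell$ has dimension at least $1$, so I will fix a nonzero $v\in L_1=L_2$. Under general position each $L_j$ is exactly $\langle v\rangle$, but I only need $v$ to lie in both.

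\textbf{Step 1: the union lies in the kernel.} Since $v\in\mathcal{K}_i$ for every $i\in\Upsilon_1\cup\Upsilon_2$, we have $v\,c_i=0$ for each such column $c_i$. Because $\Upsilon_1\neq\Upsilon_2$ have equal size, the union has size at least $\ell$.

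\textbf{Step 2: pick $\ell$ columns.} Choose any $\Upsilon\subseteq\Upsilon_1\cup\Upsilon_2$ with $|\Upsilon|=\ell$. The natural choice is $\Upsilon_1\cup\{j\}$ for some $j\in\Upsilon_2\smallsetminus\Upsilon_1$. Let $\mathcal{A}=\mathcal{K}[\Upsilon]$. Then $v\mathcal{A}=0$ with $v\neq0$, so condition (b) of Theorem~\ref{thm1} holds and $\mathcal{A}$ is singular. This gives the zero minor, and its column indices are taken from the union of the two index sets, as claimed.

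\textbf{Step 3: the representative vector.} The phrase "same representative nonzero vector" should be justified by the normalization argument. Each penultimate intersection is one-dimensional under general position, and normalizing its generator so that the first nonzero entry is $1$ makes the generator unique. Equal representatives therefore mean equal subspaces, and the argument above applies.

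\textbf{Main obstacle.} The delicate point is making precise which indices to take, since the union may have more than $\ell$ elements. In that case every $\ell$-subset of the union is a zero minor, which only strengthens the statement, and Algorithm~\ref{algo2} handles exactly the case where the union has size $\ell$.

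A second subtlety arises without general position. If some $L_j$ has dimension greater than $1$, then $\ell-1$ columns are already dependent, and any $\ell$-set containing them gives a zero minor. So the conclusion holds in that case too.
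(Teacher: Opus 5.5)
Your proposal is correct and follows essentially the same route as the paper: a common nonzero vector $v$ of the two coinciding penultimate intersections lies in every hyperplane indexed by the union, so any $\ell$ columns taken from that union form a submatrix $\mathcal{A}$ with $v\mathcal{A}=0$, which is singular by Theorem~\ref{thm1}. The paper argues this only in the signature-matrix setting (index sets $\Upsilon\cup x$ and $\Upsilon\cup x^\prime$ with $|x\cup x^\prime|=\mathtt{d}$) and tacitly assumes general position, whereas you handle arbitrary pairs of index sets and the degenerate cases explicitly, which is a harmless strengthening.
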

We repeat, the representative vector $v$ of the penultimate intersection must be normalized by setting its first nonzero entry equal to $1$ using scalar multiplication. There is also a remote possibility that two different ultimate intersections are represented by the same nonzero vector. But if that happens, it will be a rare event inside a rare event. We ignored it.
\section{The probability of success}
We recall from our earlier work~\cite[Theorems 4 \& 5]{first} that the probability of success of the Las Vegas algorithm is greater than $0.5$ as the size of the group of prime order $\mathtt{p}$ tends to infinity. The estimate was elementary and depended on computing the number of unique partitions of $\mathrm{m}$ modulo $\mathtt{p}$. In particular we computed, given a positive integer $\mathrm{m}$, the number of partitions of $\mathrm{m}$ into $k$ unique parts in the range $(1,\mathtt{p})$ is given by the following expression:
\begin{equation}
\dfrac{(\mathtt{p}-1)(\mathtt{p}-2)\cdots(\mathtt{p}-k+2)(\mathtt{p}-2k+1)}{k!}
\end{equation}
from which the probability of success can be computed as bounded below by 
\begin{equation}
\dfrac{\mathtt{p}-2k+1}{\mathtt{p}(\mathtt{p}-k+1)}
\end{equation}
which is really close to $1/\mathtt{p}$.

The idea behind computing this expression is easy. We try to fill $k$ holes with integers from $(1,\mathtt{p})$ such that they add up to $\mathrm{m}$. This can be done in 
$(\mathtt{p}-1)(\mathtt{p}-2)\cdots(\mathtt{p}-k-1)/k!$ ways, when order does not matter. Given that the total number of choices are 
$\binom{\mathtt{p}}{k}$, the probability is $1/\mathtt{p}$. The situation becomes slightly more complicated when we have to consider the fact that repetitions are not allowed. But, since $\mathtt{p}$ is much larger than $\ell$ or $\ell^\prime$, this is a pathology that can be ignored in practice. However, we did not ignore it in our work. 
 
Now recall that the points that are in the intersection of the elliptic curve and the complete curve $\mathcal{C}$, is a subset of those points that created the matrix $\mathcal{M}$ and are the ones that are not in the column indices of the zero minor~\cite[Corollary 1]{first}. 

For our current algorithm, we are looking at a conditional probability: namely, the probability of success when we have chosen an array $\mathtt{a}$ of size $\ell^\prime$ to be part of a zero minor? Let $\mathtt{a}^\prime = [\ell]\smallsetminus(\mathtt{a})$. Moreover, let $\mathtt{t}$ be the sum of the $n_i$ where $i\in \mathtt{a}^\prime$ modulo $\mathtt{p}$.
Once $\mathtt{a}$ is chosen from the dense part, we perform the recursive reduction corresponding to it. The output is a matrix $\mathcal{K}^\prime$ of size $\ell^\prime\times\ell$ in anti-diagonal format, and we are looking for a zero minor in it using central hyperplane arrangements. Now from Equation~\ref{eqn1}, we are looking at $\ell^\prime$ distinct integers $-n_j$, such that $\mathrm{m}$ times their sum is congruent to $\mathtt{t}\mod(\mathtt{p})$. 

In the determine part of our attack, we are working with the transpose of the matrix $\mathcal{K}^\prime=\mathcal{K}[\mathtt{a}]$, which is of size $\ell^\prime\times\ell$. We fix $\ell^\prime -\mathtt{d}$ columns from the dense part of $\mathcal{K}^\prime$ and then try to find $\mathtt{d}$ columns from the rest of the columns that extend the fixed part to a zero minor in $\mathcal{K}^\prime$. There are $\binom{\ell^\prime +\mathtt{d}}{\mathtt{d}}$ such choices for this fixed part. The probability that none of these choices sum up to the required sum of $\mathtt{t}:=\mathtt{t}/\mathrm{m} \mod{\mathtt{p}}$ is 
\begin{equation}\label{defect_eqn}
\Lambda=\left(1-\dfrac{\mathtt{p}-2\mathtt{d} +1}{\mathtt{p}(\mathtt{p}-\mathtt{d} +1)}\right)^{\binom{\ell^\prime+\mathtt{d}}{\mathtt{d}}}
\end{equation}    
Furthermore, this has to happen for all choices $\mathtt{b}$ from $[\ell]$. The number of such choices are 
$\binom{\ell^\prime}{\mathtt{d}}$. Thus none of the choices of $\mathtt{b}$ add upto $\mathtt{t}$ is $\Lambda^{\binom{\ell^\prime}{\mathtt{d}}}$.
Thus, the probability that at least one of these choices will add up to $\mathtt{t}$ is $1-\Lambda^{\binom{\ell^\prime}{\mathtt{d}}}$. We computed the probability for primes of size upto $2^{550}$ and defects from $15$ to $50$. However, presenting such a large table is difficult. So in Table~\ref{probtable}, we present the first $\mathtt{d}$ which achieves a significant probability and the primes are presented at a gap of $10$. 
\begin{table}[htbp]
  \centering
  \small
  \setlength{\tabcolsep}{4pt}
  \caption{Minimum defect required to attain the indicated success probability. A blank entry means that the target probability was not attained for any tested defect $15\leq \mathtt{d}\leq 50$.}
  \label{probtable}
  \scriptsize
  \begin{tabular}{r|cccc||@{\qquad}r|cccc}
    \hline
    $\lfloor\log_2 \mathtt{p}\rfloor$ & $0.25$ & $0.5$ & $0.75$ & $0.99$ &
    $\lfloor\log_2 \mathtt{p}\rfloor$ & $0.25$ & $0.5$ & $0.75$ & $0.99$ \\
    \hline
    150 & 15 & 15 & 15 & 15 & 360 & 34 & 34 & 34 & 35 \\
    160 & 16 & 16 & 16 & 16 & 370 & 35 & 35 & 35 & 35 \\
    170 & 17 & 17 & 17 & 17 & 380 & 36 & 36 & 36 & 36 \\
    180 & 18 & 18 & 18 & 18 & 390 & 37 & 37 & 37 & 37 \\
    190 & 18 & 19 & 19 & 19 & 400 & 38 & 38 & 38 & 38 \\
    200 & 19 & 20 & 20 & 20 & 410 & 39 & 39 & 39 & 39 \\
    210 & 20 & 20 & 21 & 21 & 420 & 40 & 40 & 40 & 40 \\
    220 & 21 & 21 & 21 & 22 & 430 & 40 & 41 & 41 & 41 \\
    230 & 22 & 22 & 22 & 23 & 440 & 41 & 42 & 42 & 42 \\
    240 & 23 & 23 & 23 & 24 & 450 & 42 & 42 & 43 & 43 \\
    250 & 24 & 24 & 24 & 24 & 460 & 43 & 43 & 43 & 44 \\
    260 & 25 & 25 & 25 & 25 & 470 & 44 & 44 & 44 & 45 \\
    270 & 26 & 26 & 26 & 26 & 480 & 45 & 45 & 45 & 46 \\
    280 & 27 & 27 & 27 & 27 & 490 & 46 & 46 & 46 & 46 \\
    290 & 28 & 28 & 28 & 28 & 500 & 47 & 47 & 47 & 47 \\
    300 & 29 & 29 & 29 & 29 & 510 & 48 & 48 & 48 & 48 \\
    310 & 29 & 30 & 30 & 30 & 520 & 49 & 49 & 49 & 49 \\
    320 & 30 & 31 & 31 & 31 & 530 & 50 & 50 & 50 & 50 \\
    330 & 31 & 31 & 32 & 32 & 540 &  &  &  &  \\
    340 & 32 & 32 & 32 & 33 & 550 &  &  &  &  \\
    350 & 33 & 33 & 33 & 34 &  &  &  &  &  \\
    \hline
  \end{tabular}
\end{table}

\section{Implementation}
The algorithm was implemented for prime fields using SageMath~\cite{sagemath} and FLINT~\cite{flint} together with Python's multiprocessing library. The main objective of this implementation was a proof of concept for the algorithm. The implementation went well. However, at this stage, we do not have much data to share. 

The algorithm has two parts. The input is a matrix $\mathcal{K}$ of size $\ell\times 2\ell$ in the anti-diagonal format. We assume that $\ell$ is even and $\ell = 2\ell^\prime$. Then we choose a random vector $\mathtt{a}$ of size $\ell^\prime$ from the set $[\ell]$ and an positive integer $\mathtt{d}$, called the defect. Then we compute $\mathcal{K}^\prime$ from $\mathcal{K}$ as the submatrix of columns whose indices are in $\mathtt{a}$. Then we transpose this $\mathcal{K}^\prime$ and reduce it to anti-diagonal format using only row operations. Then $\mathcal{K}^\prime$ is a matrix of size $\ell^\prime\times\ell$. This part was done serially. 

Now the parallel step begins. We make a list of all possible combinations of $\left[\ell^\prime\right]$ of size $\ell^\prime-\mathtt{d}$. Let $\mathtt{b}$ denote such a combination, and each processor got a chunk of $\mathtt{b}$. For each assigned $\mathtt{b}$, a processor computes the submatrix $\textsc{M}$ with column indices in $\mathtt{b}$ from $\mathcal{K}^\prime$ and then computes its left-kernel $\mathrm{T}$. The dimension of $\mathrm{T}$ is $\mathtt{d}$. Then corresponding to this fixed $\mathrm{T}$, we compute the signature-matrix and look for duplicates. If any one of these processors finds a duplicate, it returns both entries in the duplicate, and the computation moves into the post-processing stage, where the discrete logarithm is computed. 

Because of our ability to use the structure of the sparse part of the matrix which is in anti-diagonal format, computing the signature-matrix was fast. We computed RREF for only $\mathtt{d}$ rows in the signature-matrix, and the rest were free. Thus computing the signature-matrix was efficient. One bottleneck of the computation was computing maximal-minors in this signature-matrix. Though these were small $\mathtt{d}\times\mathtt{d}$ matrices and $\mathtt{d}$ was a small integer and the number of such minors was polynomial in $\ell+\mathtt{d}$. But still, there were a lot of them. Furthermore, we were not able to use parallel computing on these minors. This is because it will be a two-level parallelization. We did not have the expertise or the resources for that. 

In a \emph{practical attack} on the discrete logarithm problem, we will have the prime $\mathtt{p}$. From $\mathtt{p}$ we will compute $\ell$. Then the first step would be to determine the defect $\mathtt{d}$ which gives us a significant probability of success from Equation~\ref{defect_eqn}. Then we use that $\mathtt{d}$ in our algorithm. In that case, the computational resources needed can be easily determined, and the algorithm will run in time polynomial in $\ell$. However, the degree of that polynomial will be big.  
\section{Conclusion}
This paper is a step forward in the Las Vegas attack that we developed about a decade ago. The work presented here is both theoretical and experimental. The implementation done was mostly a proof of concept. The implementation went well and the algorithm presented here was rigorously tested and passed all tests.

Ultimately, this paper reduces solving the discrete logarithm problem to searching for duplicates in an array, which we call the hashtable. The length of this hashtable is $\binom{\ell^\prime + \mathtt{d}}{\mathtt{d}}$. The hashtable is created from hashes of vectors. Each vector is the generator of the kernel of $\mathtt{d}-1$ rows of the signature-matrix. The size of the signature-matrix is $(\ell^\prime+\mathtt{d})\times \mathtt{d}$. The kernel is one-dimensional, and the hash of its generator is the hash value in the hashtable. We created the signature from a single fixed $\mathtt{b}$, an array of size $\ell^\prime-\mathtt{d}$ from the set $\left[\ell^\prime\right]$. The matrix $\textsc{M}=\mathcal{K}^\prime[\mathtt{b}]$ is extracted and then $\mathrm{T}$ is the left-kernel of $\textsc{M}$. Thus, fixing $\mathtt{b}$ determines $\textsc{M}$ and its left-kernel $\mathrm{T}$ which is of dimension $\mathtt{d}$. Corresponding to this $\mathrm{T}$, we compute signatures from all hyperplanes coming from the columns of $\mathcal{K}^\prime$ whose indices are not in $\mathtt{b}$ and the signature-matrix is formed with rows as signatures. 
Then the $\mathtt{d}-1$ rows, whose right-kernel is one-dimensional, represents the penultimate intersection of $\ell^\prime -1$ hyperplanes. The hashtable is formed with the hash of the representatives on this one-dimensional subspaces. 
If there is a repetition in the hashtable, the union of these two sets of indices which gave rise to those hashes that collided is a non-zero ultimate intersection. Thus $\mathtt{b}$ union the union of these two sets is the set of indices of columns of $\mathcal{K}^\prime$ that form a zero minor. This set is in one-one correspondence with a set of indices from the sparse part of $\mathcal{K}$ and adjoining these to $\mathtt{a}$ is the indices of the columns of a zero minor in $\mathcal{K}$ and solves the discrete logarithm problem.

Now, to the question of the algorithm's success probability. At the end, the probability estimate is roughly
$1 - \left(1 - \dfrac{1}{\mathtt{p}}\right)^{\binom{\ell^\prime+\mathtt{d}}{\mathtt{d}}\binom{\ell^\prime}{\mathtt{d}}}$. Since $1-\tfrac{1}{\mathtt{p}}$ approaches $1$ as $\mathtt{p}$ becomes large, the product $\binom{\ell^\prime+\mathtt{d}}{\mathtt{d}}\binom{\ell^\prime}{\mathtt{d}}$ has to be large enough for a reasonable chance of success. This is what determines the choice of the number of recursions and the defect $\mathtt{d}$. As we mentioned earlier, there are two bottlenecks in implementing this algorithm. One is the number of choices of $\mathtt{b}$, and the other is the number of choices of $\mathtt{d}-1$ rows in the signature matrix. 

The product of binomial coefficients $\binom{\ell^\prime+d}{d}\binom{\ell^\prime}{\mathtt{d}}$ is special. On the one hand, it is a bottleneck in the algorithm because as it can become very large. On the other hand, while computing the probability, it is the exponent which makes the probability large enough for computations to be worthwhile. A careful balance is therefore required to keep the computations feasible while maintaining a high success probability. This is possible from Equation~\ref{defect_eqn}, which was used to compute Table~\ref{probtable}. 

We look at the signature matrix one last time. To solve the discrete logarithm problem, we need to find a maximal minor in it. This is a theme that keeps recurring: to find a zero minor in a matrix, we have to find a zero minor in another matrix, \emph{albeit} smaller. Recall our construction of $\mathcal{K}$, $\mathcal{K^\prime}$ and then the signature-matrix, and the fact that a zero-minor in the signature-matrix, gives rise to a zero-minor in $\mathcal{K}^\prime$, which in turn gives rise to a zero-minor in $\mathcal{K}$.
Is this phenomenon just a pure coincidence, or is there some structure underneath that remains unexposed?  

\paragraph{Acknowledgements} This work was partially supported by an NBHM research grant and a SERB Matrics grant. The AI agent ChatGPT was used for coding purposes. Ansari Abdullah implemented a preliminary version of the attack algorithm. SageMath and FLINT was used for coding purposes. 

\bibliographystyle{plain}
\renewcommand{\baselinestretch}{0.5}
\bibliography{ref}
\end{document}